\definecolor{black}{RGB}{0,0,0}
\definecolor{gray}{RGB}{102,102,102}        %#666666
\definecolor{function}{RGB}{0,102,153}      %#006699 lightblue
\definecolor{lightgreen}{RGB}{102,153,0}    %#669900
\definecolor{bluegreen}{RGB}{51,153,126}    %#33997e
\definecolor{magenta}{RGB}{217,74,122}  %#d94a7a
\definecolor{orange}{RGB}{226,102,26}       %#e2661a
\definecolor{purple}{RGB}{125,71,147}       %#7d4793
\definecolor{green}{RGB}{113,138,98}        %#718a62
\lstdefinelanguage{parameterized}{
  firstnumber=1,
  numberstyle=\tiny\color{black},
  tabsize=2,
  numbers=left,
%  backgroundcolor=\color{white},
  morekeywords = [3]{while,if,then,else,do,done},
  morekeywords = [4]{true,false,and,or},
  morekeywords = [5]{send,recv,break},
  morekeywords = [6]{bool,+,=,:=,.,;,,,-,!,=,~,>,<},
  keywordstyle = [3]\color{bluegreen},
  keywordstyle = [4]\color{lightgreen},
  keywordstyle = [5]\color{magenta},
  keywordstyle = [6]\color{orange},
  sensitive = true,
  morecomment = [l][\color{gray}]{//},
  morecomment = [s][\color{gray}]{/*}{*/},
  morecomment = [s][\color{gray}]{/**}{*/},
  morestring = [b][\color{purple}]",
  morestring = [b][\color{purple}]',
  literate=
  {�}{{\'a}}1 {�}{{\'e}}1 {�}{{\'i}}1 {�}{{\'o}}1 {�}{{\'u}}1
  {�}{{\'A}}1 {�}{{\'E}}1 {�}{{\'I}}1 {�}{{\'O}}1 {�}{{\'U}}1
  {�}{{\`a}}1 {�}{{\`e}}1 {�}{{\`i}}1 {�}{{\`o}}1 {�}{{\`u}}1
  {�}{{\`A}}1 {�}{{\'E}}1 {�}{{\`I}}1 {�}{{\`O}}1 {�}{{\`U}}1
  {�}{{\"a}}1 {�}{{\"e}}1 {�}{{\"i}}1 {�}{{\"o}}1 {�}{{\"u}}1
  {�}{{\"A}}1 {�}{{\"E}}1 {�}{{\"I}}1 {�}{{\"O}}1 {�}{{\"U}}1
  {�}{{\^a}}1 {�}{{\^e}}1 {�}{{\^i}}1 {�}{{\^o}}1 {�}{{\^u}}1
  {�}{{\^A}}1 {�}{{\^E}}1 {�}{{\^I}}1 {�}{{\^O}}1 {�}{{\^U}}1
  {�}{{\~A}}1 {�}{{\~a}}1 {�}{{\~O}}1 {�}{{\~o}}1
  {�}{{\oe}}1 {�}{{\OE}}1 {�}{{\ae}}1 {�}{{\AE}}1 {�}{{\ss}}1
  {?}{{\H{u}}}1 {?}{{\H{U}}}1 {?}{{\H{o}}}1 {?}{{\H{O}}}1
  {�}{{\c c}}1 {�}{{\c C}}1 {�}{{\o}}1 {�}{{\r a}}1 {�}{{\r A}}1
  {�}{{\euro}}1 {�}{{\pounds}}1 {�}{{\guillemotleft}}1
  {�}{{\guillemotright}}1 {�}{{\~n}}1 {�}{{\~N}}1 {�}{{?`}}1
}
\newcommand{\modulo}{{\sf ~mod~}}
\newcommand{\remove}[1]{}
\newcommand{\tlap}[0]{\textsc{TLA}\textsuperscript{+}} %$\textsc{TLA}^+$}
\newcommand{\myparagraph}[1]{\vspace{1em}{\noindent\bfseries #1{.}}}
\begin{document}

\title{Holistic Verification of Blockchain Consensus}

%\author{First Author}
%\email{first.author@email.com}
%\affiliation{%
%  \institution{Affiliation of first author}
%  \city{City}         % ACM template requires city and
%  \country{Country}   % country for affiliation
%}
%
%\author{Second Author}
%\email{second.author@email.com}
%\affiliation{%
%  \institution{Affiliation of second author}
%  \city{City}         % ACM template requires city and
%  \country{Country}   % country for affiliation
%}

\author{Nathalie Bertrand}
\email{nathalie.bertrand@inria.fr}
\affiliation{%
  \institution{Univ Rennes, Inria, CNRS, IRISA}
  \city{Rennes}         % ACM template requires city and
  \country{France}   % country for affiliation
}

\author{Vincent Gramoli}
\email{vincent.gramoli@sydney.edu.au}
\affiliation{%
  \institution{University of Sydney}
  \city{Sydney}         % ACM template requires city and
  \country{Australia}   % country for affiliation
}

\author{Igor Konnov}
\email{igor@informal.systems}
\affiliation{%
  \institution{Informal Systems}
  \city{Wien}         % ACM template requires city and
  \country{Austria}   % country for affiliation
}

\author{Marijana Lazi\'{c}}
\email{lazic@in.tum.de}
\affiliation{%
  \institution{TU Munich}
  \city{Munich}         % ACM template requires city and
  \country{Germany}   % country for affiliation
}

\author{Pierre Tholoniat}
\email{pierre@cs.columbia.edu}
\affiliation{%
  \institution{Columbia University}
  \city{New York City}         % ACM template requires city and
  \country{USA}   % country for affiliation
}

\author{Josef Widder}
\email{josef@informal.systems}
\affiliation{%
  \institution{Informal Systems}
  \city{Wien}         % ACM template requires city and
  \country{Austria}   % country for affiliation
}

\begin{abstract}
Blockchain has recently attracted the attention of the industry due,
    in part, to its ability to automate asset transfers.  
    It requires
    distributed participants to reach a consensus on a block despite the
    presence of malicious (a.k.a. Byzantine) participants.  
    Malicious participants exploit regularly weaknesses of these blockchain consensus algorithms, 
    with sometimes devastating consequences.
    In fact, these weaknesses are quite common and are well illustrated by the flaws 
    in the hand-written %proof of the Byzantine consensus algorithm of the PODC 2014 best paper~\cite{MMR14}. 
    proofs of existing blockchain consensus protocols~\cite{TG22}.
    Paradoxically,
    until now, no blockchain consensus has been holistically
    verified using model checking.

    In this paper, we remedy this paradox by model checking for the
    first time a blockchain consensus used in industry.  We propose
    a holistic approach to verify the consensus algorithm of the
    Red Belly Blockchain~\cite{CNG21}, for any number $n$ of processes
    and any number $f<n/3$ of Byzantine processes.
    We decompose directly the algorithm pseudocode in two parts---an inner broadcast
    algorithm and an outer decision algorithm---each modelled as a
    threshold automaton~\cite{KLVW17popl}, and we formalize their expected
    properties in linear-time temporal logic. We then automatically
    check the inner broadcasting algorithm, under a carefully identified
    fairness assumption. For the verification of the outer algorithm, we
    simplify the model of the inner algorithm by relying on its checked
    properties. Doing so, 
    we formally verify not only the 
    %not only 
    safety properties of the Red Belly
    Blockchain consensus but also its liveness 
    %can be automatically
    %checked 
    in about 70~seconds.
\end{abstract}

% keywords, ACM classification and conference information can be omitted for submission

\maketitle

\section{Introduction}
\subsection{Context}
Today, the market capitalization of the seminal blockchain, called
Bitcoin, is about \$803B\footnote{\url{https://coinmarketcap.com}.},
which incentivizes malicious participants to find problematic
executions that would allow them to steal financial assets.  As the
blockchain requires a distributed set of machines to agree on a unique
block of transactions to be appended to the chain, attackers
naturally try to exploit consensus vulnerabilities: they force
participants to disagree so that they wrongly believe that two
conflicting transactions are legitimate, leading to what is known as a
\emph{double spending}. In 2014, malicious participants managed to
exploit Bitcoin consensus vulnerabilities to steal \$83,000 through a
network attack. In August 2021, 570,000 transactions were reverted in a
more recent version of Bitcoin, Bitcoin SV, by forcing its blockchain
consensus protocol to violate its safety property (i.e.,
agreement). With 3 attacks on the same blockchain within 4
months, thefts are becoming
commonplace.\footnote{\url{cointelegraph.com/news/bitcoin-sv-rocked-by-three-51-attacks-in-as-many-months}}
Unsurprisingly, various bugs in specifications and in proofs of
blockchain consensus protocols appear in the literature~\cite{AGD17,Sut20}.
This is illustrated by the flaw in the consensus algorithms %of the PODC'14 best paper~\cite{TG19,CZ21}, 
now used in in-production blockchains~\cite{TG22}.
%~\cite{xdai}.\footnote{While some simple fix was proposed~\cite{MB18} without proofs, the alternative solutions proposed in the literature were all proved by hand~\cite{MMR15,CZ21}.}
The crux of the problem is that reasoning about distributed executions
of blockchain consensus protocols is hard due to several sources of
non-determinism, and in particular asynchrony and faults.
%(see for
%instance~\cite{TG19}).
%
As a result, formally verifying that a blockchain consensus protocol
is safe and live is key to mitigate financial losses.

Recent progress in mechanical proofs represent the first steps towards verifying blockchain consensus.
%Recent progress in automated verification represent first steps
%towards the model checking of fault-tolerant consensus algorithms as
%needed in the blockchain industry. 
For instance, parameterized model
checking aims at verifying algorithms for an arbitrary number $n$ of
processes~\cite{2015Bloem} that is unknown at design time. In some
contexts, it reduces the model checking for any fault number $f$ and
its upper bound $t$ to bounded model checking
questions~\cite{FismanKL08}. The threshold automaton (TA) framework
for communication-closed
algorithms~\cite{KLVW17popl,Balasubramanian20} targets algorithms with
thresholds in guards such as ``number of messages from distinct
processes exceeds $2t+1$'', and in the resilience condition, typically
of the form $n>3t$. The parameterized model checking of threshold
automata builds upon a reduction~\cite{EF82,Lipton75} that reorders
steps of asynchronous executions to obtain simpler executions, which
are equivalent to the original executions with respect to safety and
liveness properties. %  As moving steps preserves the admissibility of
% the execution, this reduction is, however, closely tied to the
% asynchronous semantics.
Such a technique has recently proved instrumental in verifying fully
asynchronous parts of consensus algorithms, like broadcast
algorithms~\cite{KLVW17popl}.

Due to the famous unfeasability of deterministic consensus in
asynchronous setting~\cite{FLP85}, this promising method was not
applied to proving deterministic consensus algorithms\footnote{Here
    deterministic means that randomization is forbidden. However, the
    environment (\emph{e.g.}, communication delays, scheduler)
    introduces non-determinism in the algorithm execution.}.
% { We refer to a \emph{deterministic consensus algorithm} as a consensus
%   algorithm that satisfies linear temporal properties such as safety
%   (agreement, validity) and liveness (termination), even if the
%   environment (e.g., communication delays, scheduler) introduces
%   non-determinism in the algorithm execution.}
In fact, the aforementioned reduction technique cannot apply to
partial synchrony~\cite{DLS88}: moving the message reception step to a
later point in the execution might violate an assumed message delay.
Yet, these delays are important as typical partially synchronous
consensus algorithms feature timers to catch
up with the unknown bound on the delay to receive a message. %  The crux
% of the problem is to automatically prove liveness or the consensus
% termination.
Most known verification techniques therefore target either synchronous
(lock-step) or asynchronous semantics. In addition, partially
synchronous consensus algorithms generally rely on a coordinator
process that helps other processes converge and whose identifier
rotates across rounds.  Some efforts have been devoted to proving the
termination of partially synchronous consensus algorithms, like Paxos,
assuming synchrony~\cite{HawblitzelHKLPR15}.  The drawback is that
such algorithms aim at tolerating non-synchronous periods before
reaching a global stabilization time (GST) after which they
terminate. Proving that such an algorithm terminates under synchrony
does not show that the algorithm would also terminate if processes
reached GST at different points of their execution. Instead, one would
also need to show that correct processes can catch up in the same
round. This would, in turn, require proving the correctness of a
synchronizer algorithm~\cite{DLS88}.

Verifying consensus is even more subtle when processes are Byzantine
as they can execute arbitrary steps, changing their local state and
the values they share. One needs to reason about executions with all
possible scenarios resulting from arbitrary behaviors, multiplying the
already large number of interleaved executions.  The verification of
such algorithms is thus either restricted to showing safety properties,
like agreement and validity, and ignoring liveness~\cite{Lam11a}; or 
to proving separate parts of the blockchain consensus~\cite{LD20}.
Such noticeable efforts are well illustrated with the series of attempts to verify 
the Stellar Consensus Protocol (SCP)~\cite{LLM19}: 
Ivy is used to model 
%the model reflects 
a slightly different algorithm in which the 
key novelty of the SCP consensus algorithm, which is 
its dynamic quorum system~\cite{NW03} called quorum slices, is replaced by static quorums~\cite{LGM19}.
%does not reflect exactly the algorithm as it replaces the key novelty of the SCP consensus algorithm, which is 
%its dynamic quorum system~\cite{NW03} called quorum slices, by static quorums~\cite{LGM19}.
In addition, it relies on axioms some of which 
%are not proved to have a model~\cite{Ste22} 
%while others 
are proved separately in Isabelle/HOL without being linked to the 
Ivy axioms~\cite{Ste22}.
%It relies on an Ivy model different from the actual algorithm that replaces 
%its dynamic quorum system~\cite{NW03}, called quorum slices, to verify
%static quorums~\cite{LGM19} instead.
%
%(i)~It relies on an Ivy model different from the actual algorithm that replaces 
%its dynamic quorum system~\cite{NW03}, called quorum slices, to verify
%static quorums~\cite{LGM19} instead.
%%; for example the model~\cite{Los19}
%%abstracts away the quorum slices, a form of dynamic quorum system~\cite{NW03}
%%that is a key novelty of SCP. 
%%Only static quorums could be modelled~\cite{LGM19}.
%(ii)~It relies on axioms without proving that they have a model~\cite{Ste22};
%only a subset of axioms were proved in Isabelle/HOL but there is no proof of a link to Ivy.
%Unfortunately, checking the proofs of some of the protocol subcomponents does not prove 
%the whole protocol itself.
Without a holistic approach, the verification of the components of a protocol does not  
imply that the protocol is verified.
\subsection{Contributions}
In this paper, we verify holistically the safety and liveness properties
of a Byzantine consensus used in the Red Belly Blockchain system~\cite{CNG21}, a scalable blockchain used in production.
Our approach is holistic because it starts from the pseudocode of the distributed algorithm as typically 
presented in the distributed computing literature, models this pseudocode and its components into disambiguated 
threshold automata (\TA{}s), model checks the desired properties of these components expressed in \LTL{} formulae, 
simplifies the \TA of the consensus algorithm with these verified properties and model checks the safety and liveness of the consensus protocol. The advantage is that the formally verified algorithm matches the pseudocode and no user-defined invariants or proofs need to be checked, which drastically reduces the risks of human errors.
%thanks to a compositional approach, we automatically
%prove the safety and liveness properties of a non-synchronous
%Byzantine consensus algorithm used in industry as part of the Red
%Belly Blockchain~\cite{CNG21}. More precisely:
\begin{enumerate}
    \item \sloppy{We formally verify a Byzantine consensus 
    	 algorithm~\cite{CGLR18} used for e-voting~\cite{CCC20}, accountability~\cite{CGG19}
	 and blockchain~\cite{CNG21}. 
          This consensus algorithm now runs in the network of the Red Belly Blockchain~\cite{CNG21} maintained 
          by the Red Belly Network company. It
          executes in asynchronous rounds that broadcast binary values
          and compare the delivered values to the parity of the round to decide. 
          To model check the algorithm holistically, we replace the partial synchrony 
          assumption by a fairness assumption. Interestingly, our fairness assumption only
          requires that in any infinite sequence of rounds, there exists a
          round where, at all correct processes, a broadcast instance delivers
          the same binary value, or bit, first.}
    \item We exploit the 
    	  modularity of distributed algorithms in
          parameterized model checking. We first model the consensus algorithm 
          into two simpler algorithms modeled as threshold automata (\TA{}s): (i)~an inner broadcast \TA modeling a binary value variant of the reliable broadcast~\cite{MMR14} and 
          (ii)~an outer decision \TA modeling a round-based execution that inspects 
          the delivered messages~\cite{CGLR18} to decide.
          We express the guarantees of the inner broadcast
          primitive as temporal logic properties that we automatically verify
          %with model checking 
          and we replace the inner \TA in the global \TA by
          a gadget TA that captures the proven temporal specification. 
          We automatically verify the global \TA with model checking.
    \item We show the practicality of our verification technique
  	 by running the parameterized model checker ByMC~\cite{KLVW17popl}
          for any number $n$
          of processes and any arbitrary number $f<n/3$ of Byzantine processes.  
          We compare the execution times when model checking the naive \TA
          encoding the consensus algorithm and when model checking both the
          inner \TA encoding the broadcast algorithm and then
          the outer \TA. 
          We demonstrate empirically that, although a parallel execution of 
          ByMC on 64 cores could not prove the safety of the naive \TA within 
          3 days, it proves both the liveness and safety of the simplified \TA
          in about 70 seconds.
\end{enumerate}

% Vincent: agreed
%\rpP{Building upon recent progress in automated verification, our
%compositional proof is a typical example of new ways that can help
%addressing the error-prone task of proving distributed systems
%correct.  While encoding partial synchrony in model checkers remains
%an open research challenge, our work shows that fairness, which
%appears as a more natural assumption that can be formalized for current
%model checkers, can be strengthened to alleviate the need in
%distributed algorithms for additional assumptions, like partial
%synchrony.}{NB: I propose to remove this, or move it to conclusions.}

%\vspace{-1em}

\subsection{Outline}
In Section~\ref{sec:notations} we introduce our preliminary
definitions, in Section~\ref{sec:da-models} we model our binary
value broadcast algorithm pseudocode into a corresponding threshold automaton, in Section~\ref{sec:composition} we 
explain how the formal verification of the properties of the broadcast algorithm helps us model check the consensus algorithm
%present 
%our
%composition 
and in Section~\ref{sec:verification} we verify the consensus
algorithm.
In Section~\ref{sec:expe} we present the results of the
model checker.
In Section~\ref{sec:rel}, we present the related work and
in Section~\ref{sec:conclusion}, we conclude.  In the Appendix we explain the
multiple-round TA to one-round TA
reduction~(\ref{app:multi2oneround}), provide examples related to
fairness~(\ref{app:no-termination}), missing
proofs~(\ref{app:same-est} and~\ref{app:corollary}) and detailed
specifications~(\ref{app:largeta} and~\ref{sec:ta-spec-termination}).

%\vspace{-1em}

\section{Preliminaries}\label{sec:notations}

The consensus algorithm runs over $n$ asynchronous sequential
processes from the set $\Pi = \{p_1,\ldots,p_n\}$.
%, and $i$ is called
%the ``index'' of $p_i$.
%``Asynchronous'' means that each process proceeds at its own speed,
%which can vary with time and remains unknown to the other processes.
%``Sequential'' means that a process executes one step at a time.
%This does not prevent it from executing several threads with an appropriate
%multiplexing. 
The processes communicate by exchanging messages through
an asynchronous reliable fully connected point-to-point network, hence there is no
bound on the delay to transfer a message but this delay is finite.
%``Reliable'' means that the network does not lose, duplicate, modify, or
%create messages. ``Point-to-point'' means that any pair of processes
%is connected by a bidirectional channel. Hence, when a process receives
%a message, it can identify its sender.

\myparagraph{Failure model}
% Vincent: commenting out
%\rpP{ and fairness}{}} 
Up to $t<n/3$ processes can
exhibit a {\it Byzantine} behavior~\cite{PSL80}, and behave
arbitrarily.  We refer to $f\leq t$ as the actual number of Byzantine
processes.  A Byzantine process is called {\it faulty}, a non-faulty
process is {\it correct}.  
% Vincent: commenting out
%   \rpP{As stated above, point-to-point
%  reliable channels implicitly assume fairness. However, we will
%  strengthen this fairness property by assuming that in an infinite
%  sequence of binary value broadcast executions of our algorithm,
%  there is one execution where some binary value is delivered by
%  correct processes before the other value. The definition of this
%  fairness is not needed for safety and is deferred to
%  Section~\ref{sec:fairness} to verify liveness.  }{NB: I wouldn't
%  talk about fairness here already. Plus, I think we shouldn't call
%  fairness the reliability of the channels: this is confusing to
%  verification people.}

\myparagraph{Algorithm semantics} The asynchronous semantics of a
distributed algorithm executed by processes in $\Pi$ assumes discrete
time and at each point in time, exactly one process takes a step. We
assume that two messages cannot be received at the same time by the
same process. The global execution then consists in an interleaving of
the individual steps taken by the processes.  Process $p_i$ sends a
message to $p_j$ by invoking the primitive ``$\lit{send}$ {\sc
  header}$(m)$ to $p_j$'', where {\sc header} indicates the type of
message and $m$ its contents. Process $p_i$ receives a message by
executing the primitive ``$\lit{receive}()$''. The shorthand
$\lit{broadcast}$({\sc
  header}$, m)$ represents ``for each $p_j \in \Pi$ do $\lit{send}$ 
  {\sc header}$(m)$ to $p_j$''.
  And the right arrow in $\lit{broadcast}$({\sc
  header}$, m) \to \ms{messages}$ indicates, when specified, 
  %
%$\lit{broadcast}$({\sc
%  header}$, m) \to \ms{messages}$ represents
%``for each $p_j \in \Pi$ do $\lit{send}$ {\sc header}$(m)$ to $p_j$''
%and ``upon reception of {\sc header}$(m)$ from process $p_j'$ 
that ``upon reception of {\sc header}$(m)$ from process $p_j'$ 
do
$\ms{messages}[p_j'] \gets \ms{messages}[p_j'] \cup \{m\}$''.  The
process id is used as a subscript to denote that a variable is local
to a process---for instance $\ms{var}_i$ is local to process $p_i$---and
is omitted when it is clear from the context.

The verification method considered in this paper exploits
the fact that the algorithms are communication-closed~\cite{EF82}, \emph{i.e.}
only messages from the current loop iteration or \emph{round} of a process may influence its
steps. This can be implemented by tagging every message by its round
number $r$; during round $r$ all received messages with tag $r'< r$
are discarded and all received messages with tag $r'>r$ are stored for
later.  %We also assume that computation takes no time.
%and no two receptions can occur at the same time.

\myparagraph{The consensus problem} Assuming that each correct process
proposes a binary value, the binary Byzantine consensus problem is for
each of them to decide on a binary value in such a way that the
following properties are satisfied:
%\vspace{-0.2cm}
\begin{enumerate}%[noitemsep]
    \item Termination. Every correct process eventually decides on a value.
    \item Agreement.   No two correct processes decide on different values.
    \item Validity.  If all correct processes propose the same value, no
          other value can be decided.
\end{enumerate}

%% \paragraph*{Parameterized verification.}
%% Proving distributed algorithms correct is typically difficult and error-prone, without
%% computer aided verification methods.
%% Such methods include model checking that historically consider systems with fixed size and a finite number of states. Parameterized model checking \vincent{add citation} is
%% key to verify systems where the number $n$ of processes and the maximum number $t$ of failures are
%% given as parameters.
%% One can show that, for specific classes of systems, if there is a bug for large values of the parameters $n$ and $t$, then there is already a bug for small values of $n$ and $t$.
%% This reduces the problem of parameterized model checking to a finite collection of classic model checking problems.

%% A system of distributed processes can be modelled as a \emph{counter system} $\ms{Sys}$ that, instead of recording which process is in which local state, records the number of processes in each local state.
%% In a run, if $k$ processes transition from location $\ell$ to location $\ell'$, a single hop of this run updates
%% the associated counters $\kappa[\ell]$ and $\kappa[\ell']$
%% directly by $k$. The upper-bound on the number of hops in runs is the \emph{diameter} of the counter system. By abuse of notation, we refer to the vector of local state counters in a round $k$ as $\kappa[\ell,k]_{\ell \in {\mathcal L}}$.

\myparagraph{Threshold automaton (TA)}
%\vincent{TODO Move this paragraph to related work.}
%Threshold automata~\cite{KVW17} were introduced to formalize asynchronous
%fault-tolerant distributed algorithms in a way that makes the amenable
%to automatic parameterized verification~\cite{2015Bloem}.
%While classic (finite-state) model checking
%requires us to fix the parameters, e.g., the number of processes,
%parameterized model checking methods involve abstraction-based model
%checking and symbolic model checking that allow us to check systems
%for all evaluations of the parameters. While initial model checking
%methods for threshold automata were limited to safety of one-shot
%broadcasting algorithms~\cite{KVW17}, they were later generalized to
%liveness~\cite{KLVW17popl}, and then also to multi-round and probabilistic
%systems~\cite{BertrandKLW19}.
A \emph{threshold automaton}~\cite{KVW17} describes the behavior of a
process in a distributed algorithm. Its nodes are \emph{locations}
representing local states, and labeled edges are \emph{guarded rules}.
Formally, it is a tuple
$\tup{{\mathcal L}, {\mathcal I}, \Gamma, {\mathcal P}, {\mathcal R}, \ms{RC}}$
where ${\mathcal L}$ is the set of locations,
${\mathcal I} \subset {\mathcal L}$ is the set of initial locations,
$\Gamma$ is the set of shared variables that all processes can update, ${\mathcal P}$ is the finite set of
parameter variables, $\mathcal{R}$ is the set of rules, and $\ms{RC}$
is the resilience condition over ${\mathbb N}^{|\Pi|}_0$. Rules are
defined as tuples $\tup{\ms{from}, \ms{to}, \phi, \vec{u}}$, where
$\ms{from}$ (resp. $\ms{to}$) describes the source (resp. destination)
locations, and the rule label is $\phi \mapsto \vec{u}$.
Formula~$\phi$ is called a \emph{threshold guard} or simply a
\emph{guard}.

%\paragraph*{}
\begin{figure}[htbp]
  \centering
\begin{algorithmic}[1]
                %	
                %			\Part{Global variables}{
                %				\State $\ms{contestants} \subseteq \{0,1\}$ a set of binary values, initially $\emptyset$.
                %			}\EndPart
                %			
                %			\Statex

                \Part{$\lit{bv-broadcast}(\lit{BV}, \tup{\ms{val}, i})$}{ \Comment{broadcast binary value $\ms{val}$} \label{line:bcast-1}
                    %\While{\lit{true}}
                    %\EndWhile
                  \State $\lit{broadcast}(\lit{BV}, \tup{\ms{val}, i})${
                    %\color{gray}{\textit{ // best-effort
                    %bcast}}
              }\label{line:bvb-first-bcast} \Repeat
                  \Comment{re-broadcast a received value only if it is
                    sufficiently represented}\label{line:repeat-start}
                  \If{$(\lit{BV}, \tup{v, *})$ received from $(t + 1)$
                    distinct processes
                    but \label{line:bcast-2}  not yet
                    re-broadcast} \label{line:no-rebcast}

                  \State $\lit{broadcast}(\lit{BV}, \tup{v,
                    i})$ \label{line:bvbcast-rebcast}

                  \EndIf
                  \If{$(\lit{BV}, \tup{v, *})$ received from
                    $(2t + 1)$  distinct
                    processes} \label{line:guard-2} 

                  \State $\binvalues \leftarrow \binvalues \cup
                  \{v\}$ \label{line:BYZ-safe-14} \Comment{deliver $v$}%
                  
                  \EndIf%

                  \EndRepeat \label{line:bvbcast-end}%
                }\EndPart

            \end{algorithmic}% 
  \caption{The pseudocode of the binary value broadcast for process $p_i$.}
  \label{fig:bvb-pseudocode}
\end{figure}

\begin{figure*}[ht]
        \resizebox{0.7\textwidth}{!}{
        \tikzstyle{node}=[circle,draw=black,thick,minimum size=4.3mm,inner sep=0.75mm,font=\large]
        \tikzstyle{init}=[node,fill=yellow!10]
        \tikzstyle{final}=[node,fill=blue!10, accepting]
        \tikzstyle{rule}=[->,thick]
        \tikzstyle{post}=[->,thick,rounded corners,densely dashed,font=\large]
        \tikzset{every loop/.style={min distance=5mm,in=75,out=108,looseness=2}}
        \begin{tikzpicture}[>=latex, thick, scale=0.85, every node/.style={scale=0.8},node distance=1.6cm]
            % nodes
            \node at (0,0) [init,label=below:\textcolor{blue}{$V_0$}]    (V0) {};
            \node[below = of V0] [init,label=below:\textcolor{blue}{$V_1$}]   (V1) {};
            \node[right = 0.8cm of V0] [node,label=below:\textcolor{blue}{$B_0$}]   (B0) {};
            \node[right = 0.8cm of V1] [node,label=below:\textcolor{blue}{$B_1$}]   (B1) {};
            \node[right = 5cm of B0] [final,label=below:\textcolor{blue}{$CB_0$}]   (CB0) {};
            \node[right = 5cm of B1] [final,label=below:\textcolor{blue}{$CB_1$}]   (CB1) {};
            \node at ($(B0)!0.5!(CB1)$) [node,label=below:\textcolor{blue}{$B_{01}$}]   (B01) {};
            \node[above = of B01] [final,label=below:\textcolor{blue}{$C_{0}$}]   (C0) {};
            \node[below = of B01] [final,label={[label distance=-1pt]100:\textcolor{blue}{$C_1$}}]   (C1) {};
            %label=below left:\textcolor{blue}{$C_1$}
            \node[right = 5cm of B01] [final,label=below:\textcolor{blue}{$C_{01}$}]   (C01) {};
            % rules 
            \draw[post] (V0) to[]
            node[sloped, above, align=center] {$r_1\colon \cpp{b_0}$} (B0);
            \draw[post] (V1) to[]
            node[sloped, above, align=center] {$r_2\colon \cpp{b_1}$} (B1);
            \draw[post] (B0) to[]
            node[sloped, above, align=center] {$b_0 \ge 2t{+}1{-}f$}
            node[sloped, below, align=center] {$r_3$} (C0);
            \draw[post] (B0) to[]
            node[sloped, above, align=center,pos=0.55] {$b_1{\ge}t{+}1{-}f \mapsto \cpp{b_1}$}
            node[sloped, below, align=center] {$r_4$} (B01);
            \draw[post] (B1) to[]
            node[sloped, above, align=center,pos=0.45] {$b_0{\ge}t{+}1{-}f \mapsto \cpp{b_0}$}
            node[sloped, below, align=center] {$r_5$} (B01);
            \draw[post] (B1) to[]
            node[sloped, above, align=center, pos=0.45] {$b_1{\ge} 2t{+}1{-}f$}
            node[sloped, below, align=center] {$r_6$}(C1);
            \draw[post] (C0) to[]
            node[sloped, above, align=center,pos=0.45] {$b_1{\ge}t{+}1{-}f \mapsto \cpp{b_1}$}
            node[sloped, below, align=center] {$r_7$} (CB0);
            \draw[post] (B01) to[]
            node[sloped, above, align=center] {$b_0 \ge 2t{+}1{-}f$}
            node[sloped, below, align=center] {$r_8$} (CB0);
            \draw[post] (B01) to[]
            node[sloped, above, align=center] {$b_1{\ge} 2t{+}1{-}f$}
            node[sloped, below, align=center] {$r_{9}$}(CB1);
            \draw[post] (C1) to[]
            node[sloped, above, align=center,pos=0.45] {$b_0{\ge}t{+}1{-}f \mapsto \cpp{b_0}$}
            node[sloped, below, align=center] {$r_{10}$} (CB1);
            \draw[post] (CB0) to[]
            node[sloped, above, align=center] {$b_1{\ge} 2t{+}1{-}f$}
            node[sloped, below, align=center] {$r_{11}$} (C01);
            \draw[post] (CB1) to[]
            node[sloped, above, align=center] {$b_0 \ge 2t{+}1{-}f$}
            node[sloped, below, align=center] {$r_{12}$} (C01);
            %self loops
            \draw[rule,loop] (V0) to[] (V0);
            \draw[rule,loop] (V1) to[] (V1);
            %\draw[rule,loop] (B0) to[] (B0);
            %\draw[rule,loop] (B1) to[] (B1);
            %\draw[rule,loop] (B01) to[] (B01);
            \draw[rule,loop] (C0) to[min distance=5mm,in=147,out=180,looseness=2] (C0);
            \draw[rule,loop] (C1) to[min distance=5mm,in=10,out=-23,looseness=2] (C1);
            \draw[rule,loop] (CB0) to[] (CB0);
            \draw[rule,loop] (CB1) to[min distance=5mm,in=10,out=-23,looseness=2] (CB1);
            \draw[rule,loop] (C01) to[] (C01);
        \end{tikzpicture}
        }
        %\vspace{-1em}
        \caption{The threshold automaton model for the binary value broadcast.}
        \label{fig:bvb-ta}
\end{figure*}
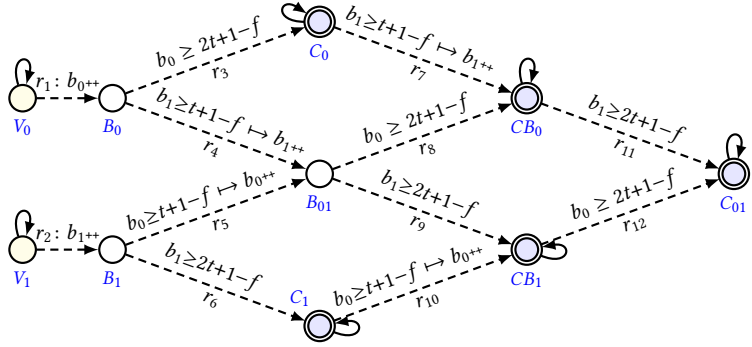

\begin{example}
  As an example, Fig.~\ref{fig:bvb-pseudocode} %in Section~\ref{ssec:bvb} 
  presents the
  pseudocode of the binary value broadcast and Fig.~\ref{fig:bvb-ta} its TA. 
  (The modeling of pseudocode (Fig.~\ref{fig:bvb-pseudocode}) into TA (Fig.~\ref{fig:bvb-ta}) will be described in detail in Section~\ref{ssec:bvb}.)
  %\rpP{}{why not putting the figure here?}  Vincent: done
  To illustrate the TA notations, note that two of the locations in
  $\mathcal{L} = \{V_0, V_1, B_0, B_1, B_{01}, C_0, C_1, CB_0, CB_1,
  C_{01}\}$ are initial: $\mathcal{I}= \{V_0, V_1\}$.  Shared
  variables are~$b_0$ and~$b_1$ and can be updated by each process traversing the \TA, while parameters are $n$, $t$ and
  $f$ and remain unchanged across the execution.  The set of rules $\mathcal{R}$ consists of
  $\{r_i \mid 1\le i \le 12\}$ together with~7 self-loops. The self-
  loops mimic the asynchrony between processes in the system.
  For
  example, rule $r_3$ is defined as
  $\tup{B_0, C_0, b_0{\ge} 2t{+}1{-}f, \vec{0}}$.  The resilience
  condition is $n>3t \land t\ge f\ge 0$.
\end{example}

A \emph{multi-round threshold automaton} is intuitively defined such
that one round is represented by a threshold automaton, and
additional so-called \emph{round-switch rules} connect
final locations with initial ones, and therefore allow processes to
move from one round to the following one. We typically depict those
round-switch rules as dotted arrows. Examples of such multi-round TA are
depicted later in Figures~\ref{fig:large-ta}
and~\ref{fig:dbft_safe_comp}. When it is clear from the context that
there are multiple rounds, we simply call them threshold automata, and
to stress that a TA does not have multiple rounds, we may call it a
one-round TA.

\myparagraph{Counter systems}
The semantics of a (one-round) threshold automaton $\TA$ are given by a
counter system $\ms{Sys}(\TA)=\tup{\Sigma, I, T}$ where $\Sigma$ is
the set of all configurations among which $I$ are the initial ones,
and $T$ is the transition relation.  A configuration $\sigma\in\Sigma$
of a one-round \TA captures the values of location counters (counting
the number of processes at each location of ${\mathcal L}$, therefore non-negative
integers), values of global variables, and parameter values.  A
transition $t\in T$ is \emph{unlocked in} $\sigma$ if there exists a
rule~$r=\tup{\ms{from}, \ms{to}, \phi, \vec{u}}\in \mathcal{R}$ such
that $\phi$ evaluates to true in $\sigma$, and location counter of
$\ms{from}$ is at least 1, denoted $\kappa[\ms{from}]\ge 1$, showing
that at least one process is currently in $\ms{from}$.  In this case
we can execute transition~$t$ on~$\sigma$ by moving a process along
the rule~$r$ from location $\ms{from}$ to location~$\ms{to}$, which is
modeled by decrementing counter $\kappa[\ms{from}]$, incrementing
$\kappa[\ms{to}]$, and updating global variables according to the
update vector~$\vec{u}$.

A counter system~$\ms{Sys}(\TA)$ of a multi-round $\TA$ is defined analogously.
A configuration captures the values of location counters and global variables \emph{in each round}, and parameter values (that do not change over rounds).
Then we define that a transition is \emph{unlocked in a round}~$R$ by evaluating the guard~$\phi$ and the counter of location~$\ms{from}$ in the round~$R$.
The execution of the transition in~$\sigma$ accordingly updates $\kappa[\ms{from},R]$, $\kappa[\ms{to},R]$ and global variables of that round, while the values of these variables in other rounds stay unchanged.

%A (finite or infinite) run in a (one-round or multi-round) counter system~$\ms{Sys}(\TA)$ is an alternating sequence of configurations and transitions $\sigma_0, t_1,\sigma_1, t_2,\ldots$ such that $\sigma_0\in I$ is an initial configuration and for every $i\ge 1$ we have that $t_i$ is unlocked in $\sigma_{i-1}$, and executing it leads to~$\sigma_i$, denoted $t_i(\sigma_{i-1})=\sigma_i$.

\myparagraph{Linear temporal logic notations}
%\paragraph*{Specifying properties in temporal logic}\label{ssec:LTL}
Following a standard model checking approach, we use formulas in
linear temporal logic (\LTL)~\cite{Pnueli77} to formalize the desired
properties of distributed algorithms. The basic elements of these
formulas, called atomic propositions, are predicates over
configurations related (i) to the emptiness of each location at each
round and (ii) to the evaluation of threshold guards in each round.
They have the following form: (i) $\kappa[L, R] \neq 0$ expresses that
at least one correct process is in location $L$ in round $R$, while
$\kappa[L, R] = 0$ expresses the opposite (in one-round systems we
just write $\kappa[L]\neq 0$ or $\kappa[L]=0$); (ii) the evaluation of
$[b_0,R] {\ge} 2t{+}1{-}f$ depends on the values of the shared
variable~$b_0$ in round~$R$ and parameters $t$ and $f$ (in one-round
systems we just write $b_0{\ge} 2t{+}1{-}f$).  \LTL\ builds on
propositional logic with $\Rightarrow$ for `implication', $\vee$ for
`or' and $\wedge$ for `and', and has extra temporal operators $\ltlF$
standing for `eventually' and $\ltlG$ for `always'. \LTL\ formulas
are evaluated over infinite runs of $\ms{Sys}(\TA)$.  Examples of
\LTL\ properties in a one-round system are \eqref{eq:bvjust},
\eqref{eq:bvobl} and \eqref{eq:bvunif} (see page~\pageref{eq:bvjust}).
\LTL\ properties in multi-round systems often have quantifiers over
round variables, as for example in \eqref{eq:agree-cc} and
\eqref{eq:valid-cc} (see page~\pageref{eq:agree-cc}).

%\paragraph*{Reduction from a multi-round TA to a one-round TA}
The tool ByMC is used to automatically verify a specific fragment of
\LTL\ on one-round systems~\cite{KLVW17,KLVW17popl}, which is
sufficient to express safety and liveness properties of
consensus~\cite{BertrandKLW19}. Moreover, thanks to
communication-closure, the verification for this fragment of temporal
logic on multi-round systems reduces to one-round
systems~\cite[Theorem~6]{BertrandKLW19} (see also
Appendix~\ref{app:multi2oneround}).

The assumption of reliable communication is modeled as follows at the
\TA level: if the guard of a rule is true infinitely often, then the
origin location of that rule will eventually be empty. This reflects
that an if branch of the pseudo-code is taken if the condition is
true. This \emph{progress assumption} is in particular crucial to
prove liveness properties: in the sequel, we prepend it to the
liveness properties in the TA specification.

%\vspace{-1em}
\section{The Binary Value Broadcast}\label{sec:da-models}
To overcome the limited scalability of model checking tools, our
holistic verification approach consists of decomposing a
distributed algorithm into encapsulated components of pseudocode that can be modelled 
in threshold automata and verified in
isolation to obtain a simplified threshold automaton that is amenable
to automated verification.

In this section we focus on a \emph{binary value broadcast}, or
\BVbcast\ for short, that will serve as the main building block of the
Byzantine consensus algorithm of Section~\ref{sec:composition}.  In
Section~\ref{ssec:bvb} we formally model the \BVbcast\ algorithm pseudocode as a threshold
automaton that tolerates a number $f$ of Byzantine failures
upper-bounded by $t$ among $n$ processes.  In
Section~\ref{ssec:bvbcas-proof} we model the specification of
\BVbcast\ in \LTL\ and verify, within 10~seconds, that it holds.  In
Section~\ref{sec:fairness} we introduce the fairness of an infinite
sequence of executions of \BVbcast\ that will play a crucial role in
verifying holistically in Section~\ref{sec:verification} the Byzantine consensus
algorithm.

\subsection{Modeling the binary value broadcast pseudocode into a threshold automaton}\label{ssec:bvb}
The binary value broadcast~\cite{MMR14}, or \emph{bv-broadcast} for
short, is a communication primitive guaranteeing that all binary
values ``bv-delivered'' were ``bv-broadcast'' by a correct process. It
is particularly useful to solve the Byzantine consensus problem with
randomization~\cite{MMR15,CZ20} or partial
synchrony~\cite{CGLR18,CCC20}.  As discussed before,
Figures~\ref{fig:bvb-pseudocode} and~\ref{fig:bvb-ta} in
Section~\ref{sec:notations} give its pseudocode and the corresponding
threshold automaton, respectively.
We now explain how we model our \BVbcast\ pseudocode (Fig.~\ref{fig:bvb-pseudocode}) parameterized by $n$ and $f$ into a threshold automaton (Fig.~\ref{fig:bvb-ta}) using the synthesis methodology~\cite{LWB17}.

\myparagraph{Pseudocode of the binary value broadcast}
%In particular, it is used by the
%algorithm presented in Section~\ref{ssec:complexity} that we decompose here.
%To this end, 
The \BVbcast\ algorithm pseudocode (Fig.~\ref{fig:bvb-pseudocode}) aims at
%Algorithm~\ref{algo:bv-bcast}, 
%consists of
having at least $2t{+}1$ processes broadcasting the same binary value. 
Each process starts this algorithm in one of two states, depending on its input value 0 or 1. 
Once a correct process receives a value from $t{+}1$ distinct processes, it broadcasts it (line~\ref{line:bcast-2}) if it did not do it already (line~\ref{line:no-rebcast}); \bcast\ is not Byzantine fault tolerant and just sends a message to all the other processes. Once a correct process receives a value from $2t{+}1$ distinct processes, it delivers it. Here the delivery at process $p_i$ is modeled by adding the value to the set $\ms{contestants}$, which will simplify the pseudocode of the Byzantine consensus algorithm in Section~\ref{ssec:bbc-safe}.
%We omit the subscript $i$ of $p_i$'s local variables when it is clear from the context, for example
%we simply write $\ms{contestants}$.
%

\myparagraph{Threshold automaton of the binary value broadcast}
%We now explain how we model our \BVbcast\ pseudocode (Fig.~\ref{fig:bvb-pseudocode}) parameterized by $n$ and $f$ into a threshold automaton (Fig.~\ref{fig:bvb-ta}) using the synthesis methodology~\cite{LWB17}.
To match the two initial states from which a process starts the  \BVbcast\ algorithm, we start the corresponding 
TA of Fig.~\ref{fig:bvb-ta} with 
%The corresponding TA of Fig.~\ref{fig:bvb-ta} has 
two
initial locations $V_0$ or $V_1$, indicating whether the (correct)
process initially has value 0 or 1, resp.
We can see form the pseudocode (Fig.~\ref{fig:bvb-pseudocode}) that a correct process $p_i$ sends only two types of messages, $(\lit{BV}, \langle 0, i \rangle)$ and $(\lit{BV}, \langle 1, i \rangle)$, these trigger the corresponding receptions at other processes. We thus define in the TA (Fig.~\ref{fig:bvb-ta}) two global variables $b_0$ and $b_1$, resp., to capture the number of the two types of messages sent by correct processes. Thus, for example, $\cpp{b_0}$ models a process broadcasting message $(\lit{BV}, \langle 0, i \rangle)$.
Because the algorithm only counts messages regardless of sender identities, we replace the messages from the pseudocode into $b_0$ and $b_1$ shared variables that are increased whenever a message is sent.

\myparagraph{From local to global variables for model checking}
While producing a formal model, extra care is needed to avoid introducing redundancies.
For example, line~\ref{line:bcast-2} indicates that the process broadcasts value $v$ if it received $v$ from $t{+}1$ distinct processes.
%\pierre{
%    \sout{
%        One may thus be tempted to evaluate a guard based on a local receive variable but, as the formal model needs to count sent values to not ``re-broadcast'' (line~\ref{line:no-rebcast}),  it would be sufficient
%        to simply enable a guard based on global send variables instead of also maintaining local receive variables.
%        %, hence leading to unnecessary redundancies.
%        Note, however, that the point-to-point reliable channels ensure that
%        $p_j$ sends message $m$ to $p_i$ implies that eventually $p_i$ receives message $m$ from $p_j$.
%        % it is equivalent for $p_i$ to receive message $m$ from $p_j$ or for $p_j$ to send message $m$ to $p_i$.
%        %\mlnote{What does it mean "equivalent"? Do the two happen at the same time? We also say that fairness assures that every sent message is eventually received. Does that fit well with this?}\vincent{Rephrased.}
%        To remove redundant local receive variables, one can use the quantifier elimination for Presburger arithmetic and obtain quantifier-free guard expressions over the shared variables that are valid inputs to ByMC.
%        For more details, note that Stoilkovska et al. eliminated the quantifier over the similar receive variables in Ben-Or's consensus algorithm with the  SMT solver Z3.
%    }
Instead of maintaining local receive variables, it is sufficient to enable a guard based on global send variables.
Indeed, to remove redundant local receive variables, one can use the quantifier elimination for Presburger arithmetic~\cite{Pres29} and obtain quantifier-free guard expressions over the shared variables that are valid inputs to ByMC~\cite{KW18,marilinajogor}.
For more details, note that Stoilkovska et al.~\cite{SK20} eliminated the quantifier over the similar receive variables in Ben-Or's consensus algorithm~\cite{Ben83} with the  SMT solver Z3~\cite{MB-tacas08}.
Finally, the point-to-point reliable channels ensure that
$p_j$ sends message $m$ to $p_i$ implies that eventually $p_i$ receives message $m$ from $p_j$.
%Because in our algorithm there are two locations $\ell$ and $\ell'$, such that $\ell'$ is reachable from $\ell$, and where a process waits for more messages of the same type in $\ell'$ than in $\ell$, we know the elimination is sound (i.e., the system resulting from the elimination is an overapproximation of the original system)~\cite{SK20}.
%%
%%Line~\ref{line:bcast-2} indicates that the process broadcasts value $v$ if it received $v$ from $t+1$ distinct processes whereas line~\ref{line:guard-2} indicates that it delivers a value $v$, by storing it into variable $\ms{contestants}$ if it has received $v$ from $2t+1$ distinct processes. 
%%
%%The two conditional rules from location $V_0$ have threshold guards $r_3$ and $r_4$ labelled with guards over global send variables $b_0\geq 2t+1-f$ and $b_1 \geq t+1-f$, respectively.
%%
%\mlnote{Do we actually need all these explanations of Ilina's paper?}
Hence shared variables $b_0$ and $b_1$ of the TA denote, respectively, the number of messages $(\lit{BV}, \langle 0, i \rangle)$ and $(\lit{BV}, \langle 1, i \rangle)$ \emph{sent} by correct processes in the pseudocode.

\myparagraph{Modeling arbitrary (Byzantine) behaviors in the \TA}
In order to model that, among the received messages, $f$ messages could have been sent
by Byzantine processes,
% Vincent: shorten
%it is important to note that a process can receive messages from $f$ faulty processes in addition to the messages sent by the $b_1$ correct processes.
%
%This is why, 
we need to map the `if' statement of the pseudocode, comparing the number of receptions from distinct processes to $t{+}1$,
to the \TA\ guards, comparing the number $b_1{+}f$ of messages sent to $t+1$.
% Vincent: this is not necessarily the earliest moment...
As $b_1$ counts the messages \textit{sent} by correct processes and $f$ is the number of faulty processes
that can send arbitrary values, a correct process can move from ${B_0}$ to ${B_{01}}$ as soon as $t{+}1{-}f$ correct
processes have sent $1$, provided that $f$ faulty processes have also sent $1$.
%%the guard of $r_4$ is $b_1{+}f \geq t{+}1$. 
%Indeed, the shared
%variable $b_1$ only counts the messages \textit{sent} by correct
%processes. However, the $f$ faulty processes might send messages with
%arbitrary values. 
%We want to consider all the possible executions, so
%the earliest moment a correct process can move from ${B_0}$ to
%${B_{01}}$ is when the $f$ faulty processes and $t{+}1{-}f$ correct
%processes have sent $1$.
%
As a result, the guard of rule $r_4$ only evaluates over global send variables as: if more than $t{+}1$ messages of type $b_1$ have been \emph{sent} by correct processes (hence the guard $b_1 \geq t{+}1{-}f$), then the shared variable $b_1$ is incremented, mimicking the \emph{broadcast} of a new message of type $b_1$.
Rule $r_3$ corresponds to lines~\ref{line:guard-2}--\ref{line:BYZ-safe-14} and delivers value $v=0$ by storing it into variable $\ms{contestants}$ upon reception of this value from $2t+1$ distinct processes.
%
%The other rule $r_3$ leaving ${B_0}$ 
%corresponds to the `if' statement at
%line~\ref{line:guard-2} that process $p_i$, when satisfied due to the reception of $v=0$ from $2t+1$ distinct processes, delivers this value $v=0$, by storing it into variable $\ms{contestants}$.
Hence, reaching location ${C_0}$ in the \TA\ indicates that the value $0$ has been delivered.
As a process might stay in this location forever, we add a self-loop with guard condition set
to $\lit{true}$.

%
%\begin{wrapfigure}{R}{0.62\textwidth}
%    \begin{minipage}{0.62\textwidth}
%        \centering
        \begin{table}[H]
        \setlength{\tabcolsep}{3pt}
%            \vspace{-4em}
            \begin{center}
                \begin{tabular}{ l | c c c c c c c c c c }
                    \toprule
                    locations      & $V_0$ & $V_1 $ & $B_0$ & $B_1$ & $B_{01}$ & $C_0$ & $CB_0$ & $C_1$ & $CB_1$ & $C_{01}$ \\
                    \midrule
                    val. broadcast & /     & /      & 0     & 1     & 0,1      & 0     & 0,1    & 1     & 0,1    & 0,1      \\
                    val. delivered & /     & /      & /     & /     & /        & 0     & 0      & 1     & 1      & 0,1      \\
                    \bottomrule
                \end{tabular}
                \vspace{1em}
                \caption{The locations of correct processes\label{table:loc}}
            \end{center}
        \end{table}
%    \end{minipage}
%\end{wrapfigure}

\myparagraph{Other locations and rules}
The locations of the automaton correspond to the
exclusive situations for a correct process depicted in Table~\ref{table:loc}.
After  location ${C_0}$, a process is still able to broadcast $1$ and
eventually deliver $1$ after that. After  location ${B_{01}}$, a
process is able to deliver $0$ and then deliver $1$, or deliver $1$
first and then deliver $0$, depending on the order in which the guards
are satisfied.
Apart from the self-loops, note that the automaton is a directed acyclic graph. Also, on every path in the graph,  a shared variable is incremented only once. This reflects that in the pseudocode, a value may only be broadcast if it has not been broadcast before.
%
%Finally, 
%
%\begin{itemize}
%    \item ${V_0}$. Initial state with value $0$, nothing has been broadcast nor delivered
%    \item ${V_1}$. Initial state with value $1$, nothing has been broadcast nor delivered
%    \item ${B_0}$. Only $0$ has been broadcast, nothing has been delivered
%    \item ${B_1}$. Only $1$ has been broadcast, nothing has been delivered
%    \item $B_{01}$.  Both $0$ and $1$ have been broadcast, nothing has been delivered
%    \item ${C_0}$. Only $0$ has been broadcast, only $0$ has been delivered
%    \item ${CB_0}$.  Both $0$ and $1$ have been broadcast, only $0$ has been delivered
%    \item ${C_1}$. Only $1$ has been broadcast, only $1$ has been delivered
%    \item ${CB_1}$.  Both $0$ and $1$ have been broadcast, only $1$ has been delivered
%    \item $C_{01}$.  Both $0$ and $1$ have been broadcast, both $0$ and $1$ have been delivered
%\end{itemize}
%\begin{center}
%    \begin{tabular}{ l | c| c| c| c| c| c| c| c| c| c }
%        locations        & $V_0$ & $V_1 $ & $B_0$ & $B_1$ & $B_{01}$ & $C_0$ & $CB_0$ & $C_1$ & $CB_1$ & $C_{01}$ \\
%        \hline
%        values broadcast & /     & /      & 0     & 1     & 0,1      & 0     & 0,1    & 1     & 0,1    & 0,1      \\
%        values delivered & /     & /      & /     & /     & /        & 0     & 0      & 1     & 1      & 0,1      \\
%        \hline
%    \end{tabular}
%\end{center}

\subsection{Properties of the binary value broadcast}\label{ssec:bvbcas-proof}

As was previously proved by hand~\cite{MMR14,MMR15}, the \BVbcast\
primitive satisfies four properties: BV-Justification, BV-Obligation,
BV-Uniformity and BV-Termination. Here, we formalize these properties
in linear temporal logic (\LTL) to formally and automatically prove
they hold. As we will discuss in Section~\ref{sec:expe}, we verify them
% four properties automatically with model checking
for any parameters $n$ and $t<n/3$ in less then 10
seconds. %in Section~\ref{ssec:bvbcas-proof},
%\begin{enumerate}%[noitemsep]
%    \item BV-Obligation.
%          If at least $(t+1)$ correct  processes BV-broadcast the same value $v$, $v$
%          is eventually added to the set $\ms{contestants}_i$ of each correct process $p_i$.
%          %\vspace{-0.2cm}
%    \item BV-Justification.
%          If $p_i$ is correct and  $v\in \ms{contestants}_i$,  $v$ has been
%          BV-broadcast by some correct process.  (Identification following from receiving more than $t$ 0s or 1s.)
%          %\vspace{-0.2cm}
%    \item BV-Uniformity.
%          If a value $v$ is added to the set $\ms{contestants}_i$ of a correct process $p_i$,
%          eventually  $v\in \ms{contestants}_j$ at every correct process $p_j$.
%          %\vspace{-0.2cm}
%    \item BV-Termination.
%          Eventually the set $\ms{contestants}_i$ of  each correct process $p_i$  is not
%          empty.
%\end{enumerate}

%Once the pseudocode is converted into a threshold automaton depicted in Figure~\ref{fig:bvb}, one can simply write the corresponding specification in the threshold automata language. 
% The broadcast automaton of Figure~\ref{fig:bvb} is only the first half of the verification work. The second half consists in specifying the correctness properties that we would like to verify on the algorithm.

%\mlnote{Should this maybe be closer to the description of the  BV-broadcast?}  
%\mlnote{Please check if I wrote the formulas correctly.}\vincent{They look good.}
The BV-Justification property states:
``If $p_i$ is correct and $v\in \binvalues_i$, then $v$ has been
bv-broadcast by some correct process'' where $v \in \{0,1\}$. Alternatively, ``if $v$ is not
bv-broadcast by some correct process and $p_i$ is correct, then
$v \notin \binvalues_i$''. In the TA from
Fig.~\ref{fig:bvb-ta}, $v\in\binvalues_i$ corresponds to
process $i$ being in one of the locations $C_v$, $CB_v$ or $C_{01}$.
Thus, justification can be expressed in \LTL\ as the conjunction
$\ms{BV-Just_0} \land \ms{BV-Just_1}$ where,
$\ms{BV-Just_v}$ is the following formula:
% \begin{equation*}
%     \Bigr(\ltlF\bigl(\kappa[C_v] \neq 0 \vee \kappa[CB_v]\neq 0 \vee \kappa[C_{01}] \neq 0\bigr)\Bigr) \Rightarrow (\kappa[V_v] \neq 0) %\\
%   \end{equation*}
%   \nbnote{I am not sure the  above formula is correct.  I propose the following. Are they equivalent?}
\begin{equation*}\label{eq:bvjust}
    \tag{$\ms{BV-Just_v}$}
    \kappa[V_v] =0 \;\Rightarrow\; \ltlG \bigl(\kappa[C_v] = 0 \wedge \kappa[CB_v] = 0 \wedge \kappa[C_{01}] = 0\bigr) \enspace.%\\
\end{equation*}

%\vincent{TODO3: add the other properties in LTL}

BV-Obligation requires that if at least $(t{+}1)$ correct  processes bv-broadcast the same value $v$,
then~$v$ is eventually added to the set $\ms{contestants}_i$ of each correct process $p_i$.
This can again be formalized as $\ms{BV-Obl_0} \land \ms{BV-Obl_1}$
where $\ms{BV-Obl_v}$ is the following formula:
\begin{equation*}\label{eq:bvobl}
    \tag{$\ms{BV-Obl_v}$}
    \ltlG\Bigl( b_v\ge t{+}1 \;\Rightarrow\; \ltlF\bigl(\bigwedge_{L\in \text{Locs}_v}\kappa[L] = 0\bigr)\Bigr) \enspace,
\end{equation*}
where $\text{Locs}_v=\{V_0, V_1, B_0, B_1, B_{01}, C_{1-v}, CB_{1-v}\}$ are all the possible locations of
a process~$i$ if $v\not\in\binvalues_i$.

BV-Uniformity requires that if a value~$v$ is added to the set $\ms{contestants}_i$ of a correct process $p_i$,
then eventually  $v\in \ms{contestants}_j$ at every correct process $p_j$.
We formalize this as $\ms{BV-Unif_0} \land \ms{BV-Unif_1}$
where $\ms{BV-Unif_v}$ is the following:
\begin{equation*}\label{eq:bvunif}
    \tag{$\ms{BV-Unif_v}$}
    \ltlF(\kappa[C_v]{\ne}0 ~\vee~ \kappa[CB_v]{\ne}0 ~\vee~ \kappa[C_{01}]{\ne}0)
    \;\Rightarrow\;
    \ltlF \bigwedge_{L\in\text{Locs}_v} \kappa[L]{=}0 \enspace,
\end{equation*}
where $\text{Locs}_v$ is defined as in $\ms{\eqref{eq:bvobl}}$.

Finally, the BV-Termination property claims that eventually the set $\ms{contestants}_i$ of
each correct process $p_i$  is non empty. This can be phrased as the following \LTL\ formula $\ms{BV-Term}$:
\begin{equation*}
    \tag{$\ms{BV-Term}$}
    \ltlF \bigl( \kappa[V_0]{=}0 ~\wedge~ \kappa[V_1]{=}0 ~\wedge~ \kappa[B_0]{=}0 ~\wedge~ \kappa[B_1]{=}0 ~\wedge~ \kappa[B_{01}]{=}0 \bigr) \enspace,
\end{equation*}
forcing each correct process to be in one of the ``final'' locations $C_0, C_1, C_{01}, CB_0, CB_1$.

\medskip

\subsection{A fairness assumption to solve consensus}\label{sec:fairness}

%\pierre{Informal definitions were redundant and hard to parse.} Vincent: commented out
The traditional approach to establishing guarantee properties in verification 
is to require that all fair computations, instead of all computations, satisfy the property~\cite{AH94}. 
We thus introduce the fairness assumption 
%under which Algorithm~\ref{alg:bbc-safe-cc} solves consensus.
that will be crucial in the rest of this paper.
In order to define it, we first define a \emph{good execution} of the
\BVbcast\ with respect to binary value $v$ as an execution:
%\rpP{where
%    all correct processes (invoke \BVbcast\ and) bv-deliver $v$ before bv-delivering any other
%    value.}{}

%\nbnote{please check reformulation	with tags above + question after Def~\ref{def:fair-bvb}}

\begin{definition}[$v$-good \BVbcast]\label{def:good-exec}
    A \BVbcast\ execution is $v$-\emph{good} if all its correct processes bv-deliver $v$ first.
\end{definition}
We express this property in \LTL.
A \BVbcast\ execution is $v$-\emph{good} if	no process ever visits locations~$C_{1-v}$ and~$CB_{1-v}$:
\begin{equation*}
    \ltlG \;\Big( \kappa[C_{1-v}]=0 ~\land~ \kappa[CB_{1-v}]=0  \Big)\enspace.
\end{equation*}

Second, we consider an infinite sequence of \BVbcast\ executions,
tagged with $r \in \nats$.
It is important to stress that the setting is asynchronous, that is,
processes invoke \BVbcast\ infinitely many times, but at their own relative speed.
Thus, they do not all invoke the \BVbcast\ tagged with the same number~$r$ at the same time.
Nonetheless, every process invokes \BVbcast\ infinitely many times and
in the $r^{\ms{th}}$ invocation its behavior depends
on the messages sent in the $r^{\ms{th}}$ invocation of other processes.
Therefore, we refer to the $r^{\ms{th}}$ execution of \BVbcast\ even though the
processes invoke it at different times.
%\rpP{We say that such an infinite sequence of \BVbcast\ executions is \emph{fair} if it contains an
%    execution tagged with $r$ that results in a good execution with
%    respect to value $r \modulo 2$.}{We define \emph{fair} executions as follows:}

\begin{definition}[fairness]\label{def:fair-bvb}
    An infinite sequence of \BVbcast\ executions is \emph{fair} if
    there exists an~$r$ such that the $r^{\ms{th}}$ execution is
    $(r \modulo 2)$-good.
\end{definition}
For simplicity, we use the terminology \emph{fair \BVbcast} when the
infinite sequence of \BVbcast\ executions is fair.
%For simplicity, we sometimes say that \BVbcast\ is fair, when we
%actually mean that the infinite sequence of its executions is fair.
%\mlnote{I added this to make our lives easier.}
%\nbnote{In fact we have several \BVbcast invocations of round $r$,
%	isn't it? So we should rather have tags $r$, and ask for the
%	existence of an $r$-tagged invocation that is $(r \modulo
%	2)$-good. What do you think?}
We illustrate in Appendix~\ref{app:no-termination} a possible
execution of \BVbcast\ whose existence implies fairness.

%\vspace{-1em}
%\section{Decomposition of deterministic consensus}\label{sec:composition}
\section{Simplified Automaton for Byzantine Consensus}\label{sec:composition}

In this section we exploit the results of the first verification phase
of Section~\ref{sec:da-models} to simplify the threshold automaton of
the Byzantine consensus algorithm.  In Section~\ref{ssec:bbc-safe} we
introduce the pseudocode of the Byzantine consensus algorithm and its
threshold automaton obtained with the naive 
modeling described in
Section~\ref{ssec:bvb}. %and measure its size that prevents the model checker from verifying it.
In Section~\ref{sec:states-ta} we replace, in this threshold
automaton, the inner \BVbcast\ automaton by a smaller one obtained
thanks to the \BVbcast\ properties that are now verified.  The
verification of the resulting simplified automaton is deferred to
Section~\ref{sec:verification}.

\subsection{The Byzantine consensus algorithm}
\label{ssec:bbc-safe}

%In this section, we introduce an asynchronous algorithm, Algorithm~\ref{alg:bbc-safe-cc}, that solves consensus under a new fairness assumption.
Algorithm~\ref{alg:bbc-safe-cc} is the DBFT Byzantine consensus algorithm~\cite{CGLR18}
that relies on the fair binary value broadcast of Section~\ref{sec:da-models}.
%and
%derives from a safe (but not live) variant of the 
%similar to a binary consensus algorithm of DBFT~\cite{CGLR18} 
It is currently used in the Red Belly Blockchain, a recent blockchain that achieves unprecedented scalability~\cite{CNG21}.
%, except that it does not use its coordinator or timeouts.
%$\BVbcast$ from Fig.~\ref{fig:bvb-code-ta} (left), and it is a safe variant of the deterministic binary consensus algorithm of DBFT~\cite{CGLR18}. 
More precisely, the DBFT binary consensus comes in two different variants: (i)~a first variant that is safe but not live in the asynchronous setting, (ii)~a second variant that is safe and live under the partial synchrony assumption.
We use the first variant of it (without coordinator or timeout) here and show that it is live under our new fairness assumption.
The DBFT binary consensus invokes $\lit{bv-broadcast}(\cdot)$ at line~\ref{BYZ-safe-04} and uses a set  $\ms{contestants}$ of binary values, whose scope is global, updated by the $\lit{bv-broadcast}$ (Fig.~\ref{fig:bvb-pseudocode}, line~\ref{line:BYZ-safe-14})
%line~\ref{line:bvbcast-end} 
and accessed by the procedure $\lit{propose}(\cdot)$ (Alg.~\ref{alg:bbc-safe-cc}, line~\ref{line:BYZ-safe-05}). % algorithm~\ref{line:BYZ-safe-05}. 

%\begin{wrapfigure}{R}{0.6\textwidth}
%    \vspace{-1.5em}
 %   \begin{minipage}{0.6\textwidth}
%        \centering
        \begin{small}

\begin{algorithm}[H]
	\caption{The Byzantine consensus algorithm at process $p_i$\label{alg:bbc-safe-cc}}
	%{\footnotesize
%	\begin{scriptsize}
	\begin{algorithmic}[1]
	
			\Part{Global scope variable}{
				\State $\ms{contestants} \subseteq \{0,1\}$, set of binary values, init. $\emptyset$.
			}\EndPart
			
			\Statex
			
%		{\color{gray}{			
%			
%    			\Part{$\lit{bv-broadcast}(\lit{BV}, \ms{val}, i)$}{ \Comment{broadcast binary value $\ms{val}$} \label{line:bcast-1-bis}
%    			%\While{\lit{true}}
%    			%\EndWhile
%			\State $\lit{broadcast}(\lit{BV}, \tup{\ms{val}, i})$ \label{line:bvb-first-bcast-bis}
%    			 \Repeat \Comment{re-broadcast a received value only if it is sufficiently represented}\label{line:repeat-start-bis}
%    				\If{$(\lit{BV}, \tup{v, *})$ received from $(t + 1)$ distinct processes but not yet re-broadcast} \label{line:bcast-2-bis}
%    					\State $\lit{broadcast}(\lit{BV}, v, i)$ \label{line:bvbcast-rebcast-bis} 	
%    				\EndIf
%    				\If{$(\lit{BV}, \tup{v, *})$ received from $(2t + 1)$ distinct  processes}\label{line:guard-2-bis}
%    					\State $\binvalues \leftarrow \binvalues \cup \{v\}$\label{line:BYZ-safe-14}\Comment{deliver $v$}%
%    				\EndIf%
%    			 \EndRepeat \label{line:bvbcast-end-bis}%
%			}\EndPart
%			
%		}}
%
%	\Statex
	
%		\Part{{\bf State}}{
%			\State $\ms{val} \in \{0,1\}$, the estimate value, initially the input value
%			\State $r \in \mathbb{N}^*$, the round number, initially $1$
%			\State $\ms{contestants}, \ms{favorites}, \ms{qualifiers}$, three mappings from round numbers to sets of binary values, all indices mapping initially to $\emptyset$
%		}\EndPart
%		
%		\Statex
	
		\Part{$\lit{propose}(\ms{est})$} { \label{line:safe-binpropose}
			\State $r \gets 0$ \label{line:set-rnumber}
			\Repeat \label{line:round-start} 
			\State $\BVbcast(\EST, \tup{\ms{est}, i})$ \label{line:bvbcast-in-consensus} \label{BYZ-safe-04}
                        % \to \binvalues$ \vincent{$\binvalues$ is a singleton}\label{BYZ-safe-04} \Comment{binary value broadcast the current estimate} 
			\WUntil$(\binvalues \neq \emptyset$) \label{line:BYZ-safe-05}
				\Comment{wait enough time to receive some $\binvalues$}\EndWUntil
			\State $\lit{broadcast}(\AUX, \tup{\binvalues, i}) \to \favorites$ \label{line:BYZ-safe-06}  \Comment{broadcast second phase message}\label{line:bincons:avcauxbcast}
			%\WUntil $|\favorites| > n-t$	 and $\forall v \in \favorites, v \in \binvalues$ \vincent{|\favorites{}| does not count default values, but counts distinct received values}
			
			\WUntil $\exists c_1, \dots, c_{n-t} : \forall 1\leq j\leq n-t\ \ms{favorites}[c_j] \neq \emptyset \,\wedge$\label{BYZ-safe-07-a} $(\ms{qualifiers} \gets \cup_{\forall 1\leq j\leq n-t}\,\ms{favorites}[c_j]) \subseteq \binvalues$
				\EndWUntil \label{BYZ-safe-07} 
			
			\If{$\ms{qualifiers} = \{v\}$} 
%			\If{$\favorites = \{v\}$} 
\label{BYZ-safe-09} \Comment{if there is only one value, then adopt it}  \label{BYZ-safe-08}
				\State $\ms{est} \leftarrow v$  \Comment{estimate becomes this singleton value} \label{BYZ-safe-11}
				\If{$v = (r \modulo 2)$}  $\lit{decide}(v)$ \label{BYZ-safe-10} \Comment{decide once if value matches parity} 
				\EndIf
				\Else{}
					 $\ms{est} \leftarrow (r \modulo 2)$   \Comment{otherwise, adopt the current parity bit} \label{BYZ-safe-12} 
			\EndIf
			\State $r \leftarrow r + 1$ \label{line:inc-rnumber} \Comment{increment the round number}
			\EndRepeat \label{line:round-end}
			%\Until{$\lit{decide}()$ was invoked in round $r-2$}\EndUntil  \label{line:stop}
		}\EndPart
		
		\end{algorithmic}
	%}
%	\end{scriptsize}
\end{algorithm}
	\end{small}
  %  \end{minipage}
%    \vspace{-0.5em}
%  \end{wrapfigure}
        
As mentioned in Section~\ref{sec:notations}, recall that the algorithm is communication-closed, so that for simplicity in the presentation we omit the current round number $r$ as the subscript of the variables and the parameter of the function calls. Variable $\ms{favorites}$ is an array of $n$ indices whose $j^{\ms{th}}$ slot records, upon delivery, the message broadcast by process $j$ in the current round.
Each process $p_i$ manages the following local variables: the current estimate $\ms{est}$, initially the input value of $p_i$;
%an unbounded array $\binvalues[1..]$ of binary values that stores at index $r$ the output of the bv-broadcast at round $r$, initially $\emptyset$ at each index; 
%an auxiliary value $v$ 
and a set of binary values $\ms{qualifiers}$.
%As depicted in Algorithm~\ref{alg:bbc-safe-cc}, %in this procedure
%each process $p_i$ manages the following local variables: the current estimate $\ms{est}_i$, initially the input value of $p_i$; an unbounded array $\binvalues_i[1..]$ of binary values that stores the output of the bv-broadcast at round $r$ at index $r$, initially $\emptyset$ at each index; an auxiliary value $v_i$ and an auxiliary set of binary values $\ms{values}_i$. The algorithm sends messages of type estimate \EST and auxiliary \AUX.
This algorithm
%exploits the parity of the round $r \modulo 2$
%for processes to converge towards the same value. That is, it 
maintains a round number $r$, initially 0 (line~\ref{line:set-rnumber}), and incremented
at the end of each iteration of the loop at line~\ref{line:inc-rnumber}.
%
%Process $p_i$ proposes its estimate $\ms{est}$ that corresponds to its initial binary value by invoking $\lit{propose}(\ms{est})$ at line~\ref{line:safe-binpropose}. It then initializes the round number $r$ (line~\ref{line:set-rnumber}) before it can decide a value $v$ by invoking $\lit{decide}(v)$ at line~\ref{BYZ-safe-10} within an asynchronous round (lines~\ref{line:round-start}--\ref{line:round-end}).
%Each of this asynchronous round comprises three phases depicted below:
%(i)~process $p_i$ invokes the $\lit{bv-broadcast}$ (line~\ref{BYZ-safe-04}) detailed in 
%Figure~\ref{fig:bvb-code-ta},
%which discards all the values proposed exclusively by Byzantine processes and stores the ``bv-delivered'' values in $\binvalues$.
%Then, (ii)~$p_i$ broadcasts normally an \AUX message  whose content is the first value received in $\binvalues_i$ of the same round (line~\ref{line:BYZ-safe-06}) and stores the delivered messages to $\favorites_i$ at line~\ref{line:BYZ-safe-06}.
%And $p_i$ waits until it receives values $\favorites$ from at least $n-t$ distinct processes that are all in the set $\binvalues$.
%Finally, (iii) 
Process
$p_i$ exchanges \EST and \AUX messages (lines~\ref{BYZ-safe-04}--\ref{line:BYZ-safe-06}), until it receives \AUX messages from $n-t$ distinct processes whose values were bv-delivered by $p_i$ (line~\ref{BYZ-safe-07-a}).
Process $p_i$ then tries at line~\ref{BYZ-safe-10} to decide a value $v$ that depends on the content of
%$\favorites$ 
$\ms{qualifiers}$ and the parity of the round.
If
%$\favorites$ 
$\ms{qualifiers}$
is a singleton there are two possible cases: if the value is the parity of the round then $p_i$ decides this value (line~\ref{BYZ-safe-10}), otherwise it sets its estimate to this value (line~\ref{BYZ-safe-11}). If $\favorites$ contains both binary values, then $p_i$ sets its estimate to the parity of the round (line~\ref{BYZ-safe-12}).
Although $p_i$ does not exit the infinite loop to help other processes decide, it can safely exit the loop after two rounds at the end of the second round that follows the first decision because all processes will be guaranteed to have decided. Note that even though a process may invoke $\lit{decide}(\cdot)$ multiple times at line~\ref{BYZ-safe-10}, only the first decision matters as the decided value does not change (see Section~\ref{sec:verification}).
%  as we will see in Lemma~\ref{lem:bvbunivalued-consensusunivalued}.
%In any case $p_i$ goes to the next round until two rounds after deciding (line~\ref{line:stop}), this is to help other processes converge.
\begin{figure*}[t]
    \begin{center}
        \makebox[\textwidth][c]{\tikzstyle{node}=[circle,draw=black,thick,minimum size=4.3mm,inner sep=0.75mm,font=\normalsize]
\tikzstyle{init}=[node,fill=yellow!10]
\tikzstyle{final}=[node,fill=blue!10]
\tikzstyle{rule}=[->,thick]
\tikzstyle{post}=[->,thick,rounded corners,font=\normalsize]
\tikzstyle{postbvb}=[post,densely dashed]  %,color=red
\tikzstyle{postdash}=[->,thick,loosely dotted,rounded corners,font=\normalsize]
\tikzset{every loop/.style={min distance=5mm,in=-25,out=18,looseness=2}}

\begin{tikzpicture}[>=latex, thick,node distance=0.9cm, scale=0.65, every node/.style={scale=0.7},xscale=0.7]

    % nodes

    % first layer

    \node at (0,0) [init,label=below:\textcolor{blue}{$V_0$}]    (V0) {};
    \node[below = of V0] [init,label=below:\textcolor{blue}{$V_1$}]   (V1) {};

    \node[right = 0.8cm of V0] [node,label=below:\textcolor{blue}{$B_0$}]   (B0) {};
    \node[right = 0.8cm of V1] [node,label=below:\textcolor{blue}{$B_1$}]   (B1) {};

    \node[node,label=below:\textcolor{blue}{$B_{01}$}] at ($(B0)!0.5!(B1) + (2,0)$) (B01) {};
    \node[above = of B01] [node,label=below:\textcolor{blue}{$C_0$}]   (C0) {};
    \node[below = of B01] [node,label=above:\textcolor{blue}{$C_1$}]   (C1) {};

    %  \node[right = of B01] [node,label={[label distance=0.15cm]87.5:\textcolor{blue}{$C$}}]   (C) {};
    \node[right =2.1cm of B0] [node,label={above:\textcolor{blue}{$CB_0$}}]   (CB0) {}; %[label distance=0.15cm]87.5
    \node[right =2.1cm of B1] [node,label={above:\textcolor{blue}{$CB_1$}}]   (CB1) {}; %[label distance=0.15cm]87.5

    \node[right = 2.5cm of B01] [node,label=above:\textcolor{blue}{$C_{01}$}]   (C01) {};

    \node[right = of C01] [node,label=below:\textcolor{blue}{$E_1$}]   (E1) {};
    \node[above = of E1] [node,label=below:\textcolor{blue}{$E_0$}]   (E0) {};
    \node[below = of E1] [node,label=above:\textcolor{blue}{$D_1$}]   (D1) {};

    % second layer

    \node[right = 0.7cm of E0] [node,label=below:\textcolor{blue}{$V_0'$}]   (V0') {};
    \node[right = 0.7cm of E1] [node,label=below:\textcolor{blue}{$V_1'$}]   (V1') {};

	\node[right = 0.8cm of V0'] [node,label=below:\textcolor{blue}{$B_0'$}]   (B0') {};
	\node[right = 0.8cm of V1'] [node,label=below:\textcolor{blue}{$B_1'$}]   (B1') {};
	
    \node[node,label=below:\textcolor{blue}{$B_{01}'$}] at ($(B0')!0.5!(B1') + (2,0)$) (B01') {};
    \node[above = of B01'] [node,label=below:\textcolor{blue}{$C_0'$}]   (C0') {};
    \node[below = of B01'] [node,label=below:\textcolor{blue}{$C_1'$}]   (C1') {};

    % \node[right = of B01'] [node,label={[label distance=0.15cm]87:\textcolor{blue}{$C'$}}]   (C') {};
    \node[right =2.1cm of B0'] [node,label={above:\textcolor{blue}{$CB_0'$}}]   (CB0') {};
    \node[right =2.1cm of B1'] [node,label={above:\textcolor{blue}{$CB_1'$}}]   (CB1') {};

    \node[right = 2.5cm of B01'] [node,label=above:\textcolor{blue}{$C_{01}'$}]   (C01') {};

    \node[right = of C01'] [node,label=below:\textcolor{blue}{$E_0'$}]   (E0') {};
    \node[above = of E0'] [node,label=below:\textcolor{blue}{$D_0$}]   (D0) {};
    \node[below = of E0'] [node,label=below:\textcolor{blue}{$E_1'$}]   (E1') {};

    % rules 

    % first layer 

    \draw[postbvb] (V0) to[]
    node[sloped, above, align=center] {$r_1\colon \cpp{b_0}$} (B0);
    \draw[postbvb] (V1) to[]
    node[sloped, above, align=center] {$r_2\colon \cpp{b_1}$} (B1);

    \draw[postbvb] (B0) to[]
    node[sloped, anchor=south] {$r_3$} (C0);
    \draw[postbvb] (B0) to[]
    node[sloped, above, align=center] {$r_4$} (B01);
    \draw[postbvb] (B1) to[]
    node[sloped, below, align=center] {$r_5$} (B01);
    \draw[postbvb] (B1) to[]
    node[sloped,anchor=north] {$r_6$} (C1);

    \draw[post] (C0) to[]
    node[sloped, above, align=center] {$r_{13}$} (E0);
    % \draw[post] (C0) to[bend left]
    % node[sloped, above, pos=0.5] {$r_8$} (E1);
    % \draw[post] (C0) to[bend left]
    % node[sloped, above, pos=0.25] {$r_9$} (D1);
    \draw[postbvb] (C0) to[]
    node[sloped, below, align=center] {$r_{7}$} (CB0);

    \draw[postbvb] (B01) to[]
    node[sloped, below, align=center] {$r_{8}$} (CB0);
    \draw[postbvb] (B01) to[]
    node[sloped, below, align=center] {$r_{9}$} (CB1);

    \draw[postbvb] (C1) to[]
    node[sloped, above, align=center] {$r_{10}$} (CB1);

    \draw[postbvb] (CB0) to[]
    node[sloped, above, align=center] {$r_{11}$} (C01);
    \draw[postbvb] (CB1) to[]
    node[sloped, above, align=center] {$r_{12}$} (C01);

    % \draw[post] (C1) to[bend right]
    % node[sloped, below, pos=0.25] {$r_{14}$} (E0);
    % \draw[post] (C1) to[bend right]
    % node[sloped, below, pos=0.5] {$r_{15}$} (E1);
    \draw[post] (C1) to[]
    node[sloped, below, align=center] {$r_{19}$} (D1);

    \draw[post] (CB0) to[]
    node[sloped, above, align=center] {$r_{14}$} (E0);
    \draw[post] (CB1) to[]
    node[sloped, above, align=center] {$r_{18}$} (D1);

    \draw[post] (C01) to[] node[sloped, above, pos=0.45] {$r_{15}$} (E0);
    \draw[post] (C01) to[] node[sloped, above, pos=0.5] {$r_{16}$} (E1);
    \draw[post] (C01) to[] node[sloped, above, pos=0.45] {$r_{17}$} (D1);

    % \draw[post] (CB1) to[] node[sloped, below, pos=0.45] {$r_{16}$} (E0);
    % \draw[post] (CB1) to[] node[sloped, below, pos=0.3] {$r_{17}$} (E1);
    % \draw[post] (CB1) to[] node[sloped, below, pos=0.45] {$r_{18}$} (D1);

    % second layer	

    \draw[postbvb] (V0') to[]
	node[sloped, above, align=center] {$r_1'\colon \cpp{b_0'}$} (B0');
	\draw[postbvb] (V1') to[]
	node[sloped, above, align=center] {$r_2'\colon \cpp{b_1'}$} (B1');
	
    \draw[postbvb] (B0') to[]
    node[sloped, anchor=south] {$r_3'$} (C0');
    \draw[postbvb] (B0') to[]
    node[sloped, above, align=center] {$r_4'$} (B01');
    \draw[postbvb] (B1') to[]
    node[sloped, below, align=center] {$r_5'$} (B01');
    \draw[postbvb] (B1') to[]
    node[sloped,anchor=north] {$r_6'$} (C1');

    \draw[post] (C0') to[]
    node[sloped, above, align=center,yshift=-0.1cm] {$r_{13}'$} (D0);
    % \draw[post] (C0') to[bend left]
    % node[sloped, above, pos=0.5] {$r_8'$} (E0');
    % \draw[post] (C0') to[bend left]
    % node[sloped, above, pos=0.25] {$r_9'$} (E1');

    \draw[postbvb] (C0') to[]
    node[sloped, below, align=center] {$r_{7}'$} (CB0');

    \draw[postbvb] (B01') to[]
    node[sloped, below, align=center] {$r_{8}'$} (CB0');
    \draw[postbvb] (B01') to[]
    node[sloped, below, pos=0.4] {$r_{9}'$} (CB1');

    \draw[postbvb] (C1') to[]
    node[sloped, above, pos=0.4] {$r_{10}'$} (CB1');

    % \draw[postbvb] (C0') to[bend left]
    % node[sloped, below, align=center] {$r_{10}'$} (C');

    % \draw[postbvb] (B01') to[bend left]
    % node[sloped, below, align=center] {$r_{11}'$} (C');
    % \draw[postbvb] (B01') to[bend right]
    % node[sloped, below, align=center] {$r_{12}'$} (C');

    % \draw[postbvb] (C1') to[bend right]
    % node[sloped, above, align=center] {$r_{13}'$} (C');

    % \draw[post] (C1') to[bend right]
    % node[sloped, below, pos=0.25] {$r_{14}'$} (D0);
    % \draw[post] (C1') to[bend right]
    % node[sloped, below, pos=0.5] {$r_{15}'$} (E0');
    \draw[post] (C1') to[]
    node[sloped, above, align=center] {$r_{19}'$} (E1');

    \draw[postbvb] (CB0') to[]
    node[sloped, above, align=center] {$r_{11}'$} (C01');
    \draw[postbvb] (CB1') to[]
    node[sloped, above, align=center] {$r_{12}'$} (C01');

    \draw[post] (CB0') to[]
    node[sloped, above, align=center] {$r_{14}'$} (D0);
    \draw[post] (CB1') to[]
    node[sloped, above, align=center] {$r_{18}'$} (E1');

    \draw[post] (C01') to[] node[sloped, above, pos=0.45] {$r_{15}'$} (D0);
    \draw[post] (C01') to[] node[sloped, above, pos=0.5] {$r_{16}'$} (E0');
    \draw[post] (C01') to[] node[sloped, above, pos=0.45] {$r_{17}'$} (E1');

    % \draw[post] (CB1') to[] node[sloped, below, pos=0.45] {$r_{16}'$} (D0);
    % \draw[post] (CB1') to[] node[sloped, below, pos=0.3] {$r_{17}'$} (E0');
    % \draw[post] (CB1') to[] node[sloped, below, pos=0.45] {$r_{18}'$} (E1');

    %self loops

%    \draw[rule,loop] (V0) to[min distance=5mm,in=75,out=108,looseness=2] (V0);
%    \draw[rule,loop] (V1) to[min distance=5mm,in=75,out=108,looseness=2] (V1);

    %\draw[rule,loop] (E0') to[min distance=5mm,in=-55,out=-12,looseness=2] (E0');
    %\draw[rule,loop] (E1') to[min distance=5mm,in=-55,out=-12,looseness=2] (E1');
    %
    %\draw[rule,loop] (D0) to[] (D0);
    %\draw[rule,loop] (D1) to[] (D1);

    % round switches
    
    \draw[post] (E0) to[] node[above]{$r_{20}$} (V0');
    \draw[post] (E1) to[] node[above]{$r_{21}$} (V1');
    \draw[post] (D1) to[] node[sloped,above]{$r_{22}$} (V1');

    \draw[postdash] (D0.north) -- ($(D0)+(0,.5)$) -| ($(V0.north west)+(-0.6,0)$) -- (V0.north west);
    \draw[postdash] (E0'.east) -|($(D0)+(0.75,0.75)$) -| ($(V0.west)+(-0.8,0)$) -- (V0.west);

   % \draw[postdash] (D1.south) -- ($(D1)+(0,-.55)$) -| ($(V1.south west)+(-0.45,0)$) -- (V1.south west);
    \draw[postdash] (E1'.east) -|($(E1')+(0.75,-1.5)$) -| ($(V1.west)+(-0.8,0)$) -- (V1.west);

\end{tikzpicture}}
%        \vspace{-2em}
        \caption{The naive threshold automaton of the Byzantine consensus of Algorithm~\ref{alg:bbc-safe-cc} where the embedded \BVbcast\ automaton is depicted with dashed arrows. Precise formulations of all rules are in Appendix~\ref{app:largeta}.
            Note that the rules $r_{20}, r_{21}$ and $r_{22}$ represent transitions from the end of an odd round to the beginning of the following (even) round of Algorithm~\ref{alg:bbc-safe-cc}, while the dotted edges represent transitions from the end of an even round to the beginning of the following (odd) one.}
        \label{fig:large-ta}
%        \vspace{-2em}
    \end{center}
\end{figure*}
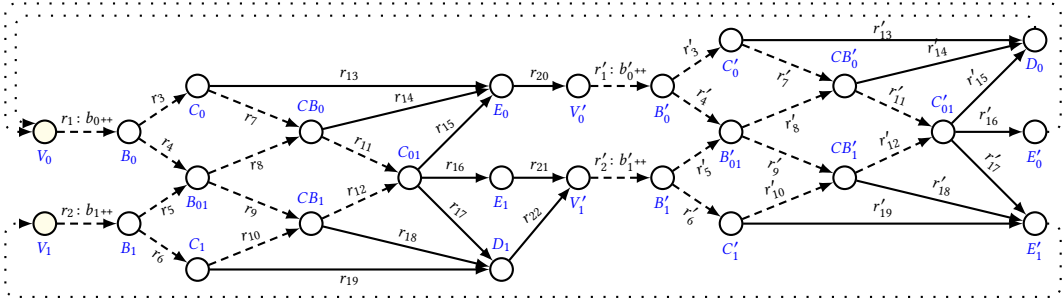

\myparagraph{The effect of fairness}
Note that the fairness notion from Section~\ref{sec:fairness} ensures there is a round~$r$
in which all correct processes bv-deliver $(r \modulo 2)$ first.
The following lemma states that under the fairness assumption there is a round of Algorithm~\ref{alg:bbc-safe-cc} in which all correct processes start with the same estimate.
The proof is deferred to Appendix~\ref{app:same-est}.

\begin{lemma}\label{lem:sameest}
    If the infinite sequence of bv-broadcast executions of Algorithm~\ref{alg:bbc-safe-cc} is fair,
    with the $r^{\ms{th}}$ execution being $(r \modulo 2)$-good,
    %    If during a round $r$ of Algorithm~\ref{alg:bbc-safe-cc}, all correct processes bv-deliver $r \modulo 2$  first, 
    then all correct processes start round~$r{+}1$ of
    Algorithm~\ref{alg:bbc-safe-cc} with estimate $r \modulo 2$.
\end{lemma}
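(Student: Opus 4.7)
The plan is to fix an arbitrary correct process $p_i$ and trace its state through the $r$-th iteration of the main loop of Algorithm~\ref{alg:bbc-safe-cc}, from which the conclusion for \emph{all} correct processes follows by the same argument. The key observation is that under the $v$-goodness hypothesis (with $v = r \modulo 2$), the set $\ms{qualifiers}_i$ computed by $p_i$ during round~$r$ must contain $v$ and cannot equal $\{1{-}v\}$. Once this is established, a short case split on the two possible shapes of $\ms{qualifiers}_i$ shows that whichever branch of lines~\ref{BYZ-safe-10}--\ref{BYZ-safe-12} fires, $p_i$'s estimate at the start of round $r{+}1$ equals $v$.

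The argument proceeds in three steps. First, I would unfold Definition~\ref{def:good-exec}: the LTL characterization $\ltlG(\kappa[C_{1-v}]{=}0 \wedge \kappa[CB_{1-v}]{=}0)$ applied to the $r$-th bv-broadcast, combined with the BV-Termination property of Section~\ref{ssec:bvbcas-proof}, implies that every correct process in round~$r$ eventually leaves the ``unfinished'' bv-broadcast locations and reaches one of $C_v$, $CB_v$, $C_{01}$. In particular, $v \in \ms{contestants}_j$ for every correct $p_j$, and $v$ is the first value bv-delivered at $p_j$. Second, I would read lines~\ref{BYZ-safe-04}--\ref{line:BYZ-safe-06} of Algorithm~\ref{alg:bbc-safe-cc} to conclude that the \AUX\ message broadcast by each correct $p_j$ in round~$r$ carries value $v$, being the first bv-delivered value. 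Third, when $p_i$ collects $n{-}t$ valid \AUX\ messages at line~\ref{BYZ-safe-07-a}, at most $f \le t$ of these come from Byzantine senders, so at least $n{-}2t \ge 1$ (since $n > 3t$) come from correct senders and therefore carry value $v$; since $v \in \ms{contestants}_i$, each such message is valid and $v \in \ms{qualifiers}_i$. Hence $\ms{qualifiers}_i \in \{\{v\}, \{0,1\}\}$. If $\ms{qualifiers}_i = \{v\}$ then, because $v = r \modulo 2$, line~\ref{BYZ-safe-10} fires and the decided value $v$ becomes the estimate carried into round $r{+}1$; if $\ms{qualifiers}_i = \{0,1\}$ then line~\ref{BYZ-safe-12} explicitly assigns $\ms{est}_i := r \modulo 2 = v$.

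I expect the main obstacle to lie in the second step, namely justifying that every correct process broadcasts \AUX\ only with value $v$ in round~$r$. This depends on the precise scheduling between bv-delivery and the \AUX\ broadcast in the pseudocode: if a correct process can in principle broadcast a second \AUX\ message after later bv-delivering $1{-}v$, the counting argument must be sharpened so that $p_i$'s selected quorum of $n{-}t$ \AUX\ messages still contains at least one \AUX($v$) from a correct sender, which still follows from $n{-}2t \ge 1$. A secondary subtlety concerns line~\ref{BYZ-safe-10}: one should check (or adopt the convention visible in the round-switch rule $r_{13}$ of Figure~\ref{fig:large-ta}) that deciding $v$ also entails $\ms{est}_i = v$ entering round~$r{+}1$.
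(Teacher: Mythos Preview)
Your proposal is correct and follows essentially the same approach as the paper's proof: from the $(r\modulo 2)$-goodness hypothesis you derive that every correct process bv-delivers $v=r\modulo 2$ first, hence every correct \AUX{} carries~$v$, hence $\ms{qualifiers}_i\in\{\{v\},\{0,1\}\}$, and the case split on lines~\ref{BYZ-safe-10}--\ref{BYZ-safe-12} finishes. Your direct counting argument ($n{-}2t\ge 1$ correct \AUX{} messages among the $n{-}t$ collected) is in fact more explicit than the paper's, which phrases the same step as ``the $\ms{qualifiers}$ sets of any two correct processes are comparable''; your anticipated subtleties (single \AUX{} per process, decision implying estimate~$v$ via the round-switch rules) are both handled correctly.
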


\myparagraph{Modeling deterministic consensus}
Figure~\ref{fig:large-ta} depicts the threshold automaton (TA)
obtained by modeling Algorithm~\ref{alg:bbc-safe-cc} with the
method we detailed in Section~\ref{ssec:bvb}.
%that
%models Algorithm~\ref{alg:bbc-safe-cc}.  
The TA depicts two iterations
of the repeat loop (line~\ref{line:round-start}), since
Algorithm~\ref{alg:bbc-safe-cc} favors different values depending on the parity of the round number.
%even and odd
%rounds of Algorithm~\ref{alg:bbc-safe-cc} favor different values and
%are thus different. 
For simplicity, we refer to the concatenation of
two consecutive rounds of the algorithm as a \emph{superround} of the
TA.  As one can expect, this TA embeds the TA of the \BVbcast\, which
is depicted by the dashed arrows, just as
Algorithm~\ref{alg:bbc-safe-cc} invokes the $\BVbcast$ algorithm of
Fig.~\ref{fig:bvb-pseudocode}.  We thus distinguish the outer TA
modeling the consensus algorithm from the inner TA modeling the
$\BVbcast$ algorithm.
%Recall that in a multi-round TA each round is represented by the same (one-round) threshold automaton.
%Note that in Algorithm~\ref{alg:bbc-safe-cc} even and odd rounds favor different values, and therefore they have to be represented by distinct one-round threshold automata.
%This is why we merge two consecutive rounds into one so-called \emph{superround} and model the algorithm as in Figure~\ref{fig:large-ta}.
%As in each round there is a call to \BVbcast, it is expected that the TA of Algorithm~\ref{alg:bbc-safe-cc} contains the TA of \BVbcast\ from Fig.~\ref{fig:bvb-code-ta}, as depicted with dashed arrows.
Although Algorithm~\ref{alg:bbc-safe-cc} is relatively simple, the
global TA happens to be too large to be verified through model
checking, as we explain in Section~\ref{sec:expe};
%The main problem is its number of different guards that indicates the constraints on the
%variables that enable rules in the threshold automaton.
%The corresponding threshold automaton contains
%38 rules between 24 locations, with 14 unique guards.
the main limiting factor is its 14~unique guards that constrain the
variables to enable rules in the TA. The detail of each rule of the TA is deferred to Appendix~\ref{app:largeta}.

\subsection{Simplified threshold automaton}
\label{sec:states-ta}

%\subsubsection{Properties of the consensus algorithm}
Our objective is to formally prove that Algorithm~\ref{alg:bbc-safe-cc}
is unconditionally safe, and that it is live under the assumption of
\fairness at the \BVbcast\ level.
Since the threshold automaton of Figure~\ref{fig:large-ta} is too
large to be handled automatically, % , we take a compositional
% approach. W
we build on the properties proved for the
%first prove under the assumption that
%\fairness assumption,
%\mlnote{we didn't define what is reception-fairness} the
$\BVbcast$ % \mlnote{is the fairness of infinite bvbs now actually reception-fairness?}\vincent{yes}
to simplify in the threshold automaton from Figure~\ref{fig:large-ta} the
part representing the $\BVbcast$. On the resulting simpler threshold
automaton, assuming fairness of the $\BVbcast$, we prove the
termination of Algorithm~\ref{alg:bbc-safe-cc} with the Byzantine
model checker ByMC in Section~\ref{sec:expe}.

% \nbnote{Note for myself}: \BVbcast is single-round, its executions are
% finite; Algo~\ref{alg:bbc-safe-cc} has arbitrarily many rounds, its
% executions are infinite, and make infinitely many calls to the
% \BVbcast{} primitive. {\bf Pb}: we want to say that multiple calls to
% \BVbcast{} (one per correct process) are fair in the same round.
%
%\subsection{Fairness of \BVbcast\ assuming \fairness}
%
%\nbnote{What is left to prove here??}\vincent{Nothing, this was just a placeholder for Lemma~\ref{lem:good-exec}.}

% We take advantage of the fairness of the \BVbcast\ (proved in the
% previous subsection), to simplify the threshold automaton for the
% consensus algorithm.

\myparagraph{High-level idea}
%
% We now apply the method described %in Figure \ref{fig:composition}
% above to factor the $\BVbcast$ primitive (Fig.~\ref{fig:bvb-code-ta}) out of the consensus algorithm (Fig.~\ref{fig:large-ta}). % This is a special case with only one procedure or subroutine ($k =1$ in Figure \ref{fig:composition}).
% The resulting composite threshold automaton is represented in Fig.~\ref{fig:dbft_safe_comp}.
%
% Assuming that we already verified the procedure with its own threshold automaton -- we did so in Section \ref{sec:ta-bvb} -- we proceed in two steps:
%
% \begin{itemize}
%     \item First, we build the composite threshold automaton itself. We do so by grouping some states and transitions from the original automaton that belong to the procedure.
%     \item Second, we write the specification that we want to check on the composite automaton. We do so by factoring correctness properties of the procedure as preconditions for the correctness properties of the composite threshold automaton.
% \end{itemize}
%
%
% The composite threshold automaton follows a structure similar to the
% pseudocode of Algorithm~\ref{alg:bbc-safe-cc}. 
Ideally, the simplified threshold automaton could be obtained from the
one of Fig.~\ref{fig:large-ta} by merging all internal states of the
$\BVbcast$ into a single state with two possible outcomes.  However,
such a merge is not trivial because the $\BVbcast$ procedure
``leaks'' into the consensus algorithm. First of all, line~\ref{line:BYZ-safe-05} of
Algorithm~\ref{alg:bbc-safe-cc} refers to contestants, a global variable that is modified by  the
$\BVbcast$ algorithm (Fig.~\ref{fig:bvb-pseudocode}). Second, a
process can execute line~\ref{line:BYZ-safe-06} of Algorithm~\ref{alg:bbc-safe-cc} even if
the $\BVbcast$ has not terminated. To capture this porosity, we
introduce a new shared variable, some additional states and a
transition rule that exploits a correctness property of the
$\BVbcast$.
%\vincent{To show safety of our algorithm, we build upon 
%Theorem~\ref{thm:remove-rquantifier} and~\cite[Property 2]{BertrandKLW19},
%   for liveness, we build upon $C_1$ and $C_2$.}

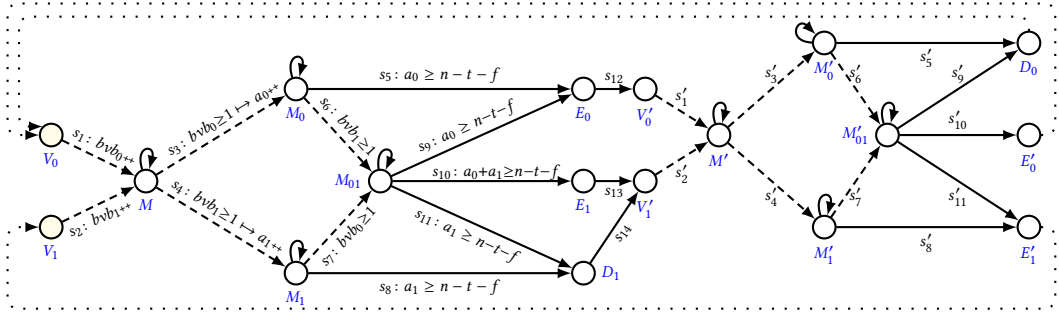
\begin{figure*}[t]
    \begin{center}
        \makebox[\textwidth][c]{\tikzstyle{node}=[circle,draw=black,thick,minimum size=4.3mm,inner sep=0.75mm,font=\normalsize]
\tikzstyle{init}=[node,fill=yellow!10]
\tikzstyle{final}=[node,fill=blue!10]
\tikzstyle{rule}=[->,thick]
\tikzstyle{post}=[->,thick,rounded corners,font=\normalsize]
\tikzstyle{postbvb} = [post, densely dashed]
\tikzstyle{postdash}=[->,thick,loosely dotted,rounded corners,font=\normalsize]

\tikzset{every loop/.style={min distance=5mm,in=75,out=108,looseness=2}}

\begin{tikzpicture}[>=latex, thick,node distance=3.5cm, scale=0.7, every node/.style={scale=0.7}]

    % nodes

    \node at (0,0) [init,label=below:\textcolor{blue}{$V_0$}]    (V0) {};
    \node [init,label=below:\textcolor{blue}{$V_1$}] at ($(V0)+(0,-1.75)$)  (V1) {};

    \node[node,label=below:\textcolor{blue}{$M$}] at ($(V0)!0.5!(V1) + (1.8,0)$) (B) {};

    \node[node,label=below:\textcolor{blue}{$M_0$}] at ($(B) + (2.85,1.75)$)  (B0) {};
    \node[node,label=below:\textcolor{blue}{$M_1$}] at ($(B) + (2.85,-1.75)$)  (B1) {};

    \node[node,label=left:\textcolor{blue}{$M_{01}$}] at ($(B0)!0.5!(B1) + (1.6,0)$) (B01) {};

    \node[right = of B0] [node,label=below:\textcolor{blue}{$E_0$}]   (E0) {};
    \node[right = of B1] [node,label=right:\textcolor{blue}{$D_1$}]   (D1) {};
    \node[node,label=below:\textcolor{blue}{$E_1$}]  at ($(E0)!0.5!(D1)$) (E1) {};

    \node[right = 0.5cm of E0] [node,label=below:\textcolor{blue}{$V_0'$}]   (V0') {};
    \node[right = 0.5cm of E1] [node,label=below:\textcolor{blue}{$V_1'$}]   (V1') {};

    \node[node,label=below:\textcolor{blue}{$M'$}] at ($(V0')!0.5!(V1')+ (1.4,0)$) (Bx) {};

    \node[node,label=below:\textcolor{blue}{$M_0'$}] at ($(Bx) + (2,1.75)$)  (B0x) {};
    \node[node,label=below:\textcolor{blue}{$M_1'$}] at ($(Bx) + (2,-1.75)$)  (B1x) {};

    \node[node,label=left:\textcolor{blue}{$M_{01}'$}] at ($(B0x)!0.5!(B1x) + (1.2,0)$) (B01x) {};

    \node[right = 2.4cm of B0x] [node,label=below:\textcolor{blue}{$D_0$}]   (D0x) {};
    \node[right = 2.4cm of B1x] [node,label=below:\textcolor{blue}{$E_1'$}]   (E1x) {};
    \node[node,label=below:\textcolor{blue}{$E_0'$}] at ($(E1x)!0.5!(D0x)$) (E0x) {};

    % rules 

    \draw[postbvb] (V0) to[]
    node[sloped, above, align=center] {$s_1\colon \cpp{\textit{bvb}_0}$} (B);
    \draw[postbvb] (V1) to[]
    node[sloped, below, align=center,xshift=-0.1cm] {$s_2\colon \cpp{\textit{bvb}_1}$} (B);

    \draw[postbvb] (B) to[]
    node[sloped, above, align=center,xshift=2mm] {$s_3\colon \textit{bvb}_0{\ge}1 \mapsto \cpp{a_0}$}
    %node[sloped, below, align=center] {$s_3$} 
    (B0);
    \draw[postbvb] (B) to[]
    node[sloped, above, align=center] {$s_4\colon \textit{bvb}_1{\ge}1 \mapsto \cpp{a_1}$}
    %node[sloped, below, align=center] {$s_4$} 
    (B1);

    \draw[post] (B0) to[]
    node[anchor=south] {$s_5\colon a_0\ge n-t-f$} (E0);
    \draw[postbvb] (B0) to[]
    node[sloped, above, align=center] {$s_6\colon \textit{bvb}_1{\ge}1$} (B01);
    \draw[postbvb] (B1) to[]
    node[sloped, below, align=center] {$s_7\colon \textit{bvb}_0{\ge}1$} (B01);
    \draw[post] (B1) to[]
    node[sloped,anchor=north] {$s_8\colon a_1\ge n-t-f$} (D1);

    \draw[post] (B01) to[]
    %node[sloped,anchor=north] {$s_{9}$} 
    node[sloped,above,pos=0.45] {$s_9\colon a_0 \ge n{-}t{-}f$} (E0);
    \draw[post] (B01) to[]
    %node[sloped,anchor=north] {$s_{10}$} 
    node[sloped,above,pos=0.58,yshift=-0.1cm] {$s_{10}\colon a_0{+}a_1{\ge}n{-}t{-}f$} (E1);
    \draw[post] (B01) to[]
    node[sloped,below,pos=0.45] {$s_{11}\colon a_1 \ge n{-}t{-}f$} (D1);

	\draw[post] (E0) to[] node[anchor=south] {$s_{12}$} (V0');
	\draw[post] (E1) to[] node[anchor=north] {$s_{13}$} (V1');
	\draw[post] (D1) to[] node[sloped,anchor=north] {$s_{14}$} (V1');

    \draw[postbvb] (V0') to[] node[anchor=south] {$s_1'$} (Bx);
    \draw[postbvb] (V1') to[] node[anchor=north] {$s_2'$} (Bx);

    \draw[postbvb] (Bx) to[] node[anchor=south] {$s_3'$} (B0x);
    \draw[postbvb] (Bx) to[] node[anchor=north] {$s_4'$} (B1x);

    \draw[post] (B0x) to[] node[anchor=north] {$s_5'$} (D0x);
    \draw[postbvb] (B0x) to[] node[anchor=south] {$s_6'$} (B01x);
    \draw[postbvb] (B1x) to[] node[anchor=north] {$s_7'$} (B01x);
    \draw[post] (B1x) to[] node[anchor=north] {$s_8'$} (E1x);

    \draw[post] (B01x) to[] node[anchor=south] {$s_9'$} (D0x);
    \draw[post] (B01x) to[] node[anchor=south] {$s_{10}'$} (E0x);
    \draw[post] (B01x) to[] node[anchor=north] {$s_{11}'$} (E1x);

    %%self loops
    %
%    \draw[rule,loop] (V0) to[] (V0);
%    \draw[rule,loop] (V1) to[] (V1);
    \draw[rule,loop] (B) to[] (B);
    \draw[rule,loop] (B0) to[] (B0);
    \draw[rule,loop] (B1) to[] (B1);
    \draw[rule,loop] (B01) to[min distance=5mm,in=65,out=98,looseness=2] (B01);
%    \draw[rule,loop] (E0) to[] (E0);
%    \draw[rule,loop] (E1) to[] (E1);

    \draw[rule,loop] (Bx) to[] (Bx);
    \draw[rule,loop] (B0x) to[min distance=5mm,in=135,out=178,looseness=2] (B0x);
    \draw[rule,loop] (B1x) to[] (B1x);
    \draw[rule,loop] (B01x) to[] (B01x);
    % \draw[rule,loop] (E0x) to[] (E0x);
    % \draw[rule,loop] (E1x) to[] (E1x);

    % \draw[rule,loop] (D0x) to[min distance=5mm,in=25,out=68,looseness=2] (D0x);
    % \draw[rule,loop] (D1) to[] (D1);

    %% round switches

    \draw[postdash] (D0x.north) -- ($(D0x)+(0,.5)$) -| ($(V0.north west)+(-0.45,0)$) -- (V0.north west);
    \draw[postdash] (E0x.east) -|($(D0x)+(0.5,0.75)$) -| ($(V0.west)+(-0.6,0)$) -- (V0.west);

    %\draw[postdash] (D1.south) -- ($(D1)+(0,-.75)$) -| ($(V1.south west)+(-0.45,0)$) -- (V1.south west);
    \draw[postdash] (E1x.east) -|($(E1x)+(0.5,-1.55)$) -| ($(V1.west)+(-0.6,0)$) -- (V1.west);

\end{tikzpicture}}
%        \vspace{-2em}
        \caption{The simplified threshold automaton of the Byzantine consensus of Algorithm~\ref{alg:bbc-safe-cc} obtained after model checking the bv-broadcast. Rules $s_j'$, $1\le i \le 11$, are obtained from $s_j$ by replacing each variable~$c\in\{a_0, a_1, \textit{bvb}_0, \textit{bvb}_1\}$ with its corresponding one~$c'$.
        }
%        \vspace{-2em}
        \label{fig:dbft_safe_comp}
    \end{center}
\end{figure*}

A superround $R$ of the simplified automaton from
Fig.~\ref{fig:dbft_safe_comp} captures round~$2R{-}1$ followed by
round~$2R$ of Algorithm~\ref{alg:bbc-safe-cc}.
One can thus restate Lemma~\ref{lem:sameest}
%Using this relation,
%Lemma~\ref{lem:sameest} rephrases 
as the following corollary in
the TA terminology. The proof is deferred to Appendix~\ref{app:corollary}.
\begin{corollary}
    %   \vspace{-0.1em}
    \label{coro:consequence-of-good-bvb}
    Let $r \in \nats$ be such that the $r^{\ms{th}}$ execution of
    $\BVbcast$ in Algorithm~\ref{alg:bbc-safe-cc} is
    $(r \modulo 2)$-good.
    % Assume the infinite sequence of $\BVbcast$ invocations of
    % Algorithm~\ref{alg:bbc-safe-cc} is fair, with the $r^{\ms{th}}$
    % invocation (in round~$r$) being $(r \modulo 2)$-good.
    Then:
    \begin{itemize}[noitemsep,topsep=0pt]
        \item If there exists $R \in \nats$ with $r = 2R{-}1$, then
              $\ltlG \bigl(\kappa[M_0,R]=0)$ holds.
        \item If there exists $R \in \nats$ with $r = 2R$, then
              $\ltlG(\kappa[M'_1,R]=0\bigr)$ holds.
    \end{itemize}
    %   \vspace{-0.2em}
\end{corollary}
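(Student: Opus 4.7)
My plan is to prove the corollary as a direct translation of the $(r \modulo 2)$-goodness hypothesis---originally stated over the inner \BVbcast\ automaton of Figure~\ref{fig:bvb-ta}---into the language of the simplified threshold automaton of Figure~\ref{fig:dbft_safe_comp}. The simplification carried out in Section~\ref{sec:states-ta} replaces the inner \BVbcast\ subgraph of the naive automaton by a gadget in which the only trace of having bv-delivered value $v$ first in round $2R{-}1$ (resp. $2R$) is being located at $M_v$ (resp. $M_v'$), while having bv-delivered both values maps to $M_{01}$ (resp. $M_{01}'$). My first step is therefore to make this correspondence explicit, by mapping $\{C_v, CB_v\}$ to $M_v$ and $C_{01}$ to $M_{01}$, and checking that every outer rule of Figure~\ref{fig:large-ta} is preserved under the abstraction.

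Given the mapping, the two cases of the corollary follow by rewriting the LTL characterization of $v$-goodness recalled in Section~\ref{sec:fairness}. For $r = 2R{-}1$, the hypothesis that the $r$-th \BVbcast\ execution is $1$-good reads, in the inner automaton, $\ltlG\bigl(\kappa[C_0] = 0 \wedge \kappa[CB_0] = 0\bigr)$. Pushing this formula through the abstraction in round $2R{-}1$ of superround $R$ turns it into $\ltlG(\kappa[M_0, R] = 0)$, which is exactly the first claim. The case $r = 2R$ is symmetric: $0$-goodness of the $(2R)$-th \BVbcast\ rewrites to $\ltlG(\kappa[M_1', R] = 0)$, matching the second claim.

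The main obstacle is to verify that the abstraction is faithful with respect to emptiness of these location sets, in the sense that no simplified-TA run places a correct process at $M_0$ (resp. $M_1'$) without a corresponding visit to $C_0 \cup CB_0$ (resp. $C_1 \cup CB_1$) in the underlying inner \BVbcast. I would justify this by pulling each simplified-TA run back along the abstraction: since it only fuses inner-\BVbcast\ locations and the new guards $\textit{bvb}_v \ge 1$ are enabled only when some correct process has bv-broadcast $v$ (consistently with BV-Justification), every visit to $M_v$ in the simplified TA traces back to a concrete process that bv-delivered $v$ first in the original algorithm, which is precisely the behavior that $(1{-}v)$-goodness forbids.
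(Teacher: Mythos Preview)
Your approach is correct and in fact more direct than the paper's own proof. The paper routes the argument through Lemma~\ref{lem:sameest}, which states that goodness of the $r$-th \BVbcast\ forces all correct processes to start round $r{+}1$ with estimate $r \modulo 2$. But that is a statement about round $r{+}1$, whereas the corollary's conclusion concerns the $M_0$ (resp. $M'_1$) location in round $r = 2R{-}1$ (resp. $r = 2R$) itself. As you observe, the conclusion follows immediately from the semantics of the simplified automaton: reaching $M_v$ (resp. $M'_v$) encodes having bv-delivered $v$ first in round $2R{-}1$ (resp. $2R$), so $(r \modulo 2)$-goodness---which by Definition~\ref{def:good-exec} says precisely that no correct process bv-delivers $1 - (r \modulo 2)$ first---directly forbids any visit to $M_{1 - (r \modulo 2)}$.

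Your care in making the abstraction $\{C_v, CB_v\} \mapsto M_v$, $C_{01} \mapsto M_{01}$ explicit is something the paper takes for granted as the defining intent of the gadget in Section~\ref{sec:states-ta}. What you gain is a cleaner, self-contained argument that does not detour through Lemma~\ref{lem:sameest}; what the paper gains is brevity by leaning on the intended reading of the simplified automaton. Both are informal at the level of linking algorithmic runs to TA locations, but neither has a gap.
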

%\nbnote{same comment as in previous section about tags for invocations}
%\begin{proof}
%    By definition of an $(r \modulo 2)$-good execution, we know that in
%    this particular invocation of $\BVbcast$, all correct processes
%    bv-deliver $r \modulo 2$ first.  It follows from
%    Lemma~\ref{lem:sameest}, that all correct processes start the next
%    round with estimate set to $r \modulo 2$.  There are two cases to
%    consider depending on the parity of the round: If $r \modulo 2 = 1$,
%    then this is the first round of superround $R$, i.e., $r = 2R-1$. As
%    a result, $\ltlG \bigl(\kappa[M_{0},R]=0\bigr)$.  If
%    $r \modulo 2 = 0$, then this is the second round of superround $R$,
%    i.e., $r = 2R$. As a result, $\ltlG \bigl(\kappa[M'_{1},R]=0\bigr)$.
%\end{proof}

%\vspace{-1em}
\section{Verification of Byzantine Consensus}\label{sec:verification}

In this section we formally prove that Algorithm~\ref{alg:bbc-safe-cc} solves the Byzantine consensus problem
%under the fairness assumption posed at the \BVbcast\ level.
with the fair \BVbcast\ and without partial
synchrony. (Appendix~\ref{app:no-termination} provides a
counter-example illustrating why the algorithm does not terminate
without the fair broadcast.)  In particular, we apply a methodology
developed for crash fault tolerant randomized
consensus~\cite{BertrandKLW19} to our context to prove both the safety
(Section~\ref{ssec:safety}) and liveness (Section~\ref{ssec:liveness})
properties of the deterministic Byzantine consensus algorithm.
%Appendix~\ref{app:no-termination} provides a counter-example
%	illustrating why the algorithm does not terminate without our fairness assumption. 

\subsection{Safety}\label{ssec:safety}
Under no fairness assumption, one can prove the safety properties---agreement and validity---of the Byzantine consensus based
on \BVbcast.
Precisely, we formulate these properties in \LTL\ and want to establish that they hold
on the threshold automaton of Fig.~\ref{fig:dbft_safe_comp}.

Agreement requires that no two correct processes disagree, that is, if one process decides~$v$ then no process should decide $1{-}v$ for all binary values $v\in\{0,1\}$.
Thus, we want to prove that the following formula holds for both values of~$v$:
\begin{equation*}
    \label{eq:agree-cc}
    \tag{$\ms{Agree}_v$}
    \forall R\in\nats, \forall R'\in\nats\;
    \Bigl(\ltlF  \kappa[D_v,R] \neq 0  \;\Rightarrow\; \ltlG \kappa[D_{1-v},R']=0 \Bigr) \enspace,
\end{equation*}
stating that for any two superrounds~$R$ and~$R'$, if eventually a process decides~$v$, then globally (in any superround) no process will decide $1-v$. In terms of the TA from Fig.~\ref{fig:dbft_safe_comp}, if a process enters location~$D_v$ no process should enter location $D_{1-v}$ (not only in that superround, but in any other).

Validity requires that if no process proposes a value~$v\in\{0,1\}$, no process should ever decide that value.
Hence, we want to prove the following formula for both values of~$v$:
\begin{equation*}
    \label{eq:valid-cc}
    \tag{$\ms{Valid}_v$}
    \forall R\in\nats\; \Bigl(
    \kappa[V_v,1] =0 \;\Rightarrow\;
    \ltlG \kappa[D_v,R] =0 \Bigr) \enspace,
\end{equation*}
stating that if initially no process has value $v$, then globally (in any superround) no process  decides~$v$.
In terms of the TA, if location $V_v$ is initially empty (in superround~1), then no process should enter location~$D_v$ in any superround.

% Recall that in Section~\ref{sec:notations} when we introduced \LTL, we
% said that
ByMC can only check formulas of the form
$\forall R\in\nats\; \varphi[R]$ (see
Appendix~\ref{app:multi2oneround}). Thus, automatically checking
\eqref{eq:agree-cc} and \eqref{eq:valid-cc} is non-trivial, as
they both involve two superround numbers: $R$ and $R'$ in
\eqref{eq:agree-cc}, and  1 and $R$ in
\eqref{eq:valid-cc}.
%The approach from~\cite{BertrandKLW19} suggests
%instead to check well-chosen one-superround
%invariants~\eqref{eq:inv1_v} and~\eqref{eq:inv2_v}:
We instead check
well-chosen one-superround
invariants~\eqref{eq:inv1_v} and~\eqref{eq:inv2_v}:

\begin{equation*}
    \label{eq:inv1_v}
    \tag{$\ms{Inv1}_v$}
    \forall R\in\nats\; \Bigl(\ltlF \kappa[D_v,R] \neq 0
    \;\Rightarrow \;
    \ltlG \bigl(\kappa[D_{1-v},R] = 0 \wedge \kappa[E'_{1-v},R] = 0 \bigr)\Bigr) \enspace,
\end{equation*}
\begin{equation*}
    \label{eq:inv2_v}
    \tag{$\ms{Inv2}_v$}
    \forall R\in\nats\; \Bigl(\ltlG \kappa[V_v,R] = 0
    \;\Rightarrow \;
    \ltlG \bigl(\kappa[D_v,R] = 0 \wedge \kappa[E'_v,R] = 0  \bigr)\Bigr)\enspace.
\end{equation*}

The choice of these invariants follows a previous approach used for
the crash fault tolerant
consensus %~\cite{BertrandKLW19} where Proposition~2 claims
where it is shown
%Proposition~2 of~\cite{BertrandKLW19} claims 
that these invariants imply (\ref{eq:agree-cc}) and
(\ref{eq:valid-cc})~\cite[Proposition~2]{BertrandKLW19}. Intuitively,
this follows from the fact that (i) emptiness of $D_0$ and $E_0'$ in
one superround leads to the emptiness of $V_0$ in the next superround,
and (ii) emptiness of $E_1'$ (and $D_1$) in one superround leads to
the emptiness of $V_1$ in the next superround.  Therefore, in order to
prove agreement and validity, we only need to prove (\ref{eq:inv1_v})
and~(\ref{eq:inv2_v}) for both values $v\in\{0,1\}$. We successfully
do this automatically with ByMC (see Section~\ref{sec:expe}).

%\nbnote{should we also explain that we only check these invariants for
%    $R=1$, on the 1-supperound TA?}

% We explained that the threshold automaton of Fig.~\ref{fig:bvb-code-ta} (right) can be as permissive as possible, allowing for many possible executions. \mlnote{What does that mean?} \pierre{It's an old comment explaining why some guards are easily enabled, but we can probably remove this paragraph now.} Safety properties that hold on this permissive set of executions will also hold in a subset of these executions. However, liveness properties need additional assumptions.
%\vspace{-1em}
%\vspace{-1em}
\subsection{Liveness}\label{ssec:liveness}
We now aim at proving termination of Algorithm~\ref{alg:bbc-safe-cc}.
First, we need to prove that every superround eventually terminates,
in the sense that for every round eventually there are no processes in
any location of that round to the exception of the final ones ($D_0$,
$E_0'$ and $E_1'$).  Formally, using ByMC we prove the following:
%\mlnote{Pierre,
%	did you also show this with ByMC?} \pierre{Yes, I did, with a
%	fairness precondition.} \nbnote{I think we should make the fairness
%	(I guess it requires progress if an outgoing edge is enabled)
%	explicite: for instance say we have a $\ms{progress}$ assumption
%	(and explain what it means), and write
%	$\ms{progress} \Rightarrow \forall R\in\nats\; \ltlF \Bigl(
%	\bigwedge_{L\in \mathcal{L}\setminus\{D_0,E_0',E_1'\}}
%	\kappa[L,R]=0\Bigr)$ for \ref{eq:roundterm}}
\begin{equation*}
    \label{eq:roundterm}
    \tag{$\ms{SRound Term}$}
    \forall R\in\nats\; \ltlF \Bigl( \bigwedge_{L\in \mathcal{L}\setminus\{D_0,E_0',E_1'\}} \kappa[L,R]=0\Bigr)\enspace.
\end{equation*}
From this property and the shape of the TA from  Fig.~\ref{fig:dbft_safe_comp},
it easily follows that if no process ever enters $E_0'$ and $E_1'$ of some superround, then
all processes visit~$D_0$ in that superround.
Similarly, if no process ever enters $E_0$ and $E_1$ of some superround, then
all processes visit~$D_1$ in that superround.
This allows us to express termination as the following \LTL\ property on the threshold
automaton of Fig.~\ref{fig:dbft_safe_comp}:
\begin{equation*}
    \label{eq:term}
    \tag{$\ms{Term}$}
    \exists R\in\nats\; \Bigl( \ltlG \bigl(\kappa[E_0,R] =0 \wedge \kappa[E_1,R] =0 \bigr)
    \;\vee\;
    \ltlG\bigl(\kappa[E'_0,R] =0 \wedge \kappa[E'_1,R] =0 \bigr) \Bigr)\enspace.
\end{equation*}
In words, there is a superround $R$ in which either (i) all processes  visit $D_1$,
or (ii) all processes  visit $D_0$.
%For simplicity in the reasoning that follows, we prove the following stronger property that easily implies (\ref{eq:term}):
%\begin{equation*}
%  \label{eq:term1}
%  \tag{$\ms{Term1}$}
%  \begin{split}
%  \exists R\in\nats\; & \Bigl( \ltlG \bigl(\kappa[E_0,R] =0 \wedge \kappa[E_1,R] =0 \wedge \kappa[D_0,R] =0 \wedge \kappa[E_0',R]= 0\bigr) \\
%  \;\vee\;\; & 
%  \ltlG\bigl(\kappa[E_1,R] =0 \wedge \kappa[D_1,R] =0 \wedge\kappa[E'_0,R] =0 \wedge \kappa[E'_1,R] =0 \bigr) \Bigr)
%	\end{split}
%\end{equation*}
%
Here again formula~\eqref{eq:term} is non-trivial to check since it
contains an existential quantifier over superrounds, that cannot be
handled by the model checker ByMC.  Adapting the technique
from~\cite[Section~7]{BertrandKLW19} to a non-randomized context, it
is sufficient to prove a couple of properties on the threshold
automaton of Fig.~\ref{fig:dbft_safe_comp}, that we detail below. The
first property expresses that if no process starts a superround $R$
with value $v$, then all processes decide $1{-}v$ in superround $R$:
\begin{equation*}
    \label{eq:c2}
    \tag{$\ms{Dec}$}
    \begin{split}
        \forall R\in\nats\; &
        \Bigl(\ltlG \bigl( \kappa[V_0,R] =0\bigr) \Rightarrow
        \ltlG \bigl(\kappa[E_0,R] =0 \wedge \kappa[E_1,R] =0 \bigr) \Bigr) \\
        \wedge\;\;	&
        \Bigl(\ltlG \bigl( \kappa[V_1,R] =0\bigr) \Rightarrow
        \ltlG \bigl(\kappa[E_0',R] =0 \wedge  \kappa[E_1',R]= 0 \bigr) \Bigr) \enspace.
    \end{split}
\end{equation*}
The second property claims that (i) emptiness of $M_0$ in superround~$R$ implies
(emptiness of $E_0$ and therefore also) emptiness of $D_0$ and $E_0'$ in~$R$ and
(ii) emptiness of $M_1'$ in superround~$R$ implies emptiness of $E_1'$ in~$R$:
%\begin{equation*}
    \begin{multline}
    \label{eq:lem41}
    \tag{$\ms{Good}$}
    \forall R\in\nats\;
    \Bigl(  \bigl( \ltlG \kappa[M_0,R]=0)
    \;\Rightarrow\;
    \ltlG (\kappa[D_0,R] \wedge \kappa[E_0',R]=0) \bigr)\\
    \;\wedge\; 
    \bigl( \ltlG \kappa[M_1',R]=0)
    \;\Rightarrow\;
    \ltlG \kappa[E'_1,R]=0 \bigr)\Bigr) \enspace.
%    \end{split}
   \end{multline}
%\end{equation*}

The main idea is to exploit the fairness of {\BVbcast},
which ensures the existence of a round $r$ which is $(r \modulo 2)$-good.
Intuitively, the next superround $R = \lceil r/2\rceil$ is the desired witness
for~\eqref{eq:term}, namely the one in which all processes decide (not necessarily for the first time).
We formalize this in our main result:
\begin{theorem}
    Assuming fairness of the \BVbcast, Algorithm~\ref{alg:bbc-safe-cc} terminates.
\end{theorem}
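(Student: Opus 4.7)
The plan is to exploit the fairness assumption together with the verified properties~\eqref{eq:c2}, \eqref{eq:lem41}, and~\eqref{eq:roundterm} to exhibit a witness superround for~\eqref{eq:term}. First I would invoke Definition~\ref{def:fair-bvb} to obtain an index $r\in\nats$ such that the $r^{\ms{th}}$ \BVbcast\ execution is $(r\modulo 2)$-good, and set $R=\lceil r/2 \rceil$. The argument then splits on the parity of $r$, and in each branch shows that superround $R{+}1$ is the desired witness: Corollary~\ref{coro:consequence-of-good-bvb} combined with~\eqref{eq:lem41} only empties \emph{one} of the three final locations $\{D_0,E'_0,E'_1\}$ of superround~$R$, so one round-switch is needed before~\eqref{eq:c2} becomes applicable.

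In the odd case $r=2R-1$, Corollary~\ref{coro:consequence-of-good-bvb} gives $\ltlG\,\kappa[M_0,R]{=}0$, and~\eqref{eq:lem41} then yields $\ltlG(\kappa[D_0,R]{=}0\wedge\kappa[E'_0,R]{=}0)$. Applying~\eqref{eq:roundterm} to superround~$R$ forces every correct process into one of the final locations $\{D_0,E'_0,E'_1\}$; since the first two stay empty throughout~$R$, every correct process funnels through $E'_1$ and round-switches to $V_1$ of superround~$R{+}1$. Thus $\ltlG\,\kappa[V_0,R{+}1]{=}0$, and the first conjunct of~\eqref{eq:c2} supplies $\ltlG(\kappa[E_0,R{+}1]{=}0\wedge\kappa[E_1,R{+}1]{=}0)$, which is the first disjunct of~\eqref{eq:term} with witness $R{+}1$.

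The even case $r=2R$ is symmetric: the Corollary yields $\ltlG\,\kappa[M'_1,R]{=}0$, \eqref{eq:lem41} then gives $\ltlG\,\kappa[E'_1,R]{=}0$, and~\eqref{eq:roundterm} forces every correct process through $D_0$ or $E'_0$, both of which round-switch to $V_0$ of $R{+}1$. Hence $\ltlG\,\kappa[V_1,R{+}1]{=}0$, and the second conjunct of~\eqref{eq:c2} yields the second disjunct of~\eqref{eq:term} at $R{+}1$. Combining either outcome with the automatically verified safety invariants~\eqref{eq:inv1_v} and~\eqref{eq:inv2_v} completes the argument that Algorithm~\ref{alg:bbc-safe-cc} solves Byzantine consensus under fair \BVbcast.

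The main obstacle I foresee is the bookkeeping across rounds and superrounds: \eqref{eq:lem41} refers to locations living in different rounds of superround~$R$, and one must carefully track which final location round-switches to which initial location of~$R{+}1$. Once the round-switch edges $E'_1\to V_1$ and $\{D_0,E'_0\}\to V_0$ are made explicit, the argument is essentially a one-step propagation: emptying two of the three final locations of~$R$ forces every correct process into a single initial location of~$R{+}1$, after which~\eqref{eq:c2} closes the loop.
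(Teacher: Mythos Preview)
Your proposal is correct and follows essentially the same route as the paper's proof: invoke fairness to obtain~$r$, apply Corollary~\ref{coro:consequence-of-good-bvb} to the corresponding superround~$R$, use~\eqref{eq:lem41} to empty the relevant final locations, propagate across the round-switch to force a single initial location in~$R{+}1$, and conclude via~\eqref{eq:c2} that~\eqref{eq:term} holds at~$R{+}1$. The only cosmetic difference is that you invoke~\eqref{eq:roundterm} explicitly to argue that all correct processes reach the final locations of~$R$ before the round-switch, whereas the paper appeals directly to the shape of the dotted round-switch rules (e.g., only $E'_1$ feeds $V_1$, so $\ltlG\,\kappa[E'_1,R]{=}0$ already gives $\ltlG\,\kappa[V_1,R{+}1]{=}0$); both are sound, and your final sentence about the safety invariants is extraneous for the termination theorem itself but harmless.
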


\begin{proof}
    First we prove formulas \eqref{eq:roundterm} and \eqref{eq:c2} and
    \eqref{eq:lem41} automatically using the model checker ByMC.
    Formula~\eqref{eq:roundterm} guarantees that formula~\eqref{eq:term}
    indeed expresses termination. Next, we show that formulas \eqref{eq:c2} and
    \eqref{eq:lem41} together imply \eqref{eq:term}.
    Indeed, since we assume fairness of the
        {\BVbcast}, from Corollary~\ref{coro:consequence-of-good-bvb} we
    know that there is a superround~$R$ in which one of the following
    two scenarios happen:
    \begin{itemize}[noitemsep,topsep=0pt]
        \item $\ltlG \kappa[M_1',R]=0$. In this case
              formula~(\ref{eq:lem41}) implies $\ltlG \kappa[E'_1,R]=0$.
              Note that the form of the (dotted) round-switch rules yield
              that no process starts the superround $R{+}1$ with value~$1$,
              that is, we have $\ltlG \kappa[V_1,R{+}1]=0$.
              Then formula~\eqref{eq:c2} implies
              $\ltlG \bigl(\kappa[E_0',R{+}1] =0 \wedge  \kappa[E_1',R{+}1]= 0 \bigr)$,
              which makes formula~\eqref{eq:term} true, that is,
              all processes visit $D_0$ in superround $R{+}1$. %, or \vincent{looks like an unfinished sentence.}
        \item $\ltlG \kappa[M_0,R]=0$. In this case
              formula~\eqref{eq:lem41} implies
              $\ltlG (\kappa[D_0,R] \wedge \kappa[E_0',R]=0) \bigr)$.
              Now the round-switch rules yield that no process starts the
              superround~$R{+}1$ with value~$0$, that is, we have
              $\ltlG \kappa[V_0,R{+}1]=0$. Then formula~\eqref{eq:c2}
              implies $\ltlG \bigl(\kappa[E_0,R{+}1] =0 \wedge \kappa[E_1,R{+}1] =0 \bigr)$,
              which satisfies formula~\eqref{eq:term}, that is,
              all processes visit $D_1$ in $R{+}1$.
    \end{itemize}
    As a consequence, our automated proofs of properties
    \eqref{eq:roundterm} and \eqref{eq:c2} and
    \eqref{eq:lem41} guarantee termination of
    Algorithm~\ref{alg:bbc-safe-cc} under fairness of
    \BVbcast.
\end{proof}
\section{Experiments}\label{sec:expe}
In this section, we model check the safety but also the liveness
properties of Byzantine consensus for any parameters $t$ and $n>3t$.
In particular, we show that we formally verify the simplified representation of the
blockchain consensus
in less than 70\,seconds, whereas we could not model check its naive representation.

\myparagraph{Experimental settings}
We used the parallelized version of ByMC 2.4.4 with MPI. The $\BVbcast$ and the simplified automaton were verified on a laptop with Intel® Core™ i7-1065G7 CPU @ 1.30GHz × 8 and 32 GB of memory.  The naive Threshold Automaton (\TA) timed-out even on  a  4  AMD Opteron  6276  16-core  CPU  with  64  cores  at  2300MHz with 64 GB of memory. $\ms{Good}$ and $\ms{Dec}$ are only relevant for the simplified automaton.   The specification of the termination for ByMC is deferred to Appendix~\ref{sec:ta-spec-termination}.

\begin{table}[htbp]
    \setlength{\tabcolsep}{4pt}
    \begin{center}
        %    \vspace{-0.5em}
        \begin{tabular}{ r|l|llll}
            \toprule
            \textbf{TA}                                                            & \textbf{Size}                      & \textbf{Property}                           & \textbf{\# schemas} & \textbf{Avg. length} & \textbf{Time} \\
            \hline
            \multirow{4}{6em}{$\BVbcast$ (Fig.~\ref{fig:bvb-ta})}             & \multirow{4}{6em}{4 unique guards                                                                                                             \\10 locations\\19 rules}  & $\ms{BV-Just_0}$                            & 90                  & 54                          & 5.61s              \\
                                                                                   &                                    & $\ms{BV-Obl_0}$                             & 90                  & 79                   & 6.87s         \\
                                                                                   &                                    & $\ms{BV-Unif_0}$                            & 760                 & 97                   & 27.64s        \\
                                                                                   &                                    & $\ms{BV-Term}$                              & 90                  & 79                   & 6.75s         \\
            \midrule
            \multirow{4}{6em}{Naive consensus (Fig.~\ref{fig:large-ta})}           & \multirow{4}{6em}{14 unique guards                                                                                                            \\24 locations\\45 rules}                         & $\ms{Inv1_0}$                               & >100 000            & -                           & >24h               \\
                                                                                   &                                    & $\ms{Inv2_0}$                               & >100 000            & -                    & >24h          \\
                                                                                   &                                    & $\ms{SRound-Term}$                          & >100 000            & -                    & >24h          \\
                                                                                   &                                    &                                             &                     &                      &               \\

            \midrule
            \multirow{4}{6em}{Simplified consensus (Fig.~\ref{fig:dbft_safe_comp})} & \multirow{4}{6em}{10 unique guards                                                                                                            \\16 locations\\37 rules} & $\ms{Inv1_0}                              $ & 6                   & 102                         & 4.68s              \\
                                                                                   &                                    & $\ms{Inv2_0}                              $ & 2                   & 73                   & 4.56s         \\
                                                                                   &                                    & $\ms{SRound-Term}$                          & 2                   & 109                  & 4.13s         \\
                                                                                   &                                    & $\ms{Good_0}$                               & 2                   & 67                   & 4.55s         \\
                                                                                   &                                    & $\ms{Dec_0}$                                & 2                   & 73                   & 4.62s         \\

            %            \midrule
            %            \multirow{1}{*}{Counter-example for $n>3t$} & &$\ms{Inv1_0}$  & & & \textasciitilde 4s \\                                              
            \bottomrule
        \end{tabular}
        \vspace{1em}
        \caption{Although none of the properties of the naive blockchain consensus could be verified within a day of execution of the model checker, it takes about $\mathtt{\sim}4$\,s to verify each property on the simplified representation of the blockchain consensus. Overall it takes less than 70\,seconds to verify both that the binary value broadcast and the simplified representation of the blockchain consensus are correct.
            \label{tab:experiments}}
        %        \vspace{-2.5em}
    \end{center}
    %    \vspace{-1em}
\end{table}

\myparagraph{Results}
Table~\ref{tab:experiments} depicts the time (6th column) it takes to verify each property (3rd column) automatically. In particular, it lists the
\TA (1st column) on which these properties were tested, as well as the size of these \TA (2nd column) as the number of guards locations and rules they contain.  A schema (4th column) is a sequence of unlocked guards (contexts) and rule sequences that is used to generate execution paths~\cite{KLVW17popl} whose average length appears in the 5th column.
It
demonstrates the efficiency of our
approach as it allows to verify all properties of the Byzantine
consensus automatically in less than 70 seconds whereas a
non-compositional approach timed out. Although not indicated here, we also generated a counter-example
of $\ms{Inv1_0}$ for $n > 3t$ on the composite automaton in $\mathtt{\sim}4$\,s.

%We proposed a compositional proof of consensus using model checking for any number $n$ of processes
%and any upper bound $t<n/3$ on the number of Byzantine processes.
%
%While dealing with partial synchrony in model checking techniques
%remains an open research challenge, our fairness assumption offers an
%elegant way of automatically proving the liveness of consensus
%algorithms by alleviating the burden placed on partially synchronous
%consensus algorithms of specifying the behavior of the coordinator,
%timer mechanisms or a synchronizer algorithm.

\section{Related Work}
\label{sec:rel}

% motivations
%The best paper at PODC 2014~\cite{MMR14} introduces a so-called byzantine consensus algorithm that builds upon a binary value broadcast and does not actually terminate as was recently shown~\cite{TG19}. An extended version~\cite{MMR15} 
%acknowledges that a fair scheduler could help ensure termination of the original algorithm and presents another algorithm that builds upon more elaborate broadcast primitives and that does not suffer this lack of liveness.
%These issues are not rare as consensus is a difficult problem. 
%Results published in major systems venues~\cite{KADCW09,MA06} were shown incorrect~\cite{AGD17}.
%Even when formally specified in a language with mathematically-defined syntax and semantics, like TLA~\cite{MAK13}, consensus protocols can be erroneous~\cite{Sut20}. 
%Verification that a system satisfies some properties is typically either done with theorem proving, with specialized decision procedures or automatically with model checking. The problem of proving automatically 
%the termination of a byzantine consensus algorithm without assuming synchrony remains an open question.

%\paragraph{Tools for automated proofs.}
%  theorem prover
Interactive theorem provers~\cite{SergeyWT18,RahliGBC15,GleissenthallKB19} already checked proofs of consensus algorithms used in the blockchain industry.
In particular, Coq helped prove %two-phase commit~\cite{SergeyWT18}, but also 
the Raft consensus algorithm~\cite{WilcoxWPTWEA15}, which is not Byzantine fault tolerant but part of crash fault tolerant distributed ledgers~\cite{BCG16,ABB18}, and the Byzantine consensus algorithm of the Algorand blockchain~\cite{ACLMPPR19}.
In addition, Dafny~\cite{HawblitzelHKLPR15} proved MultiPaxos, a consensus algorithm that tolerates crash failures.
Isabelle/HOL~\cite{NPW02} was used to prove Byzantine fault tolerant algorithms~\cite{CBDM11} and was combined with Ivy to try to prove the Byzantine consensus protocol of the Stellar blockchain~\cite{LD20} as discussed in the introduction.
Theorem provers check proofs, not the algorithms.
Hence, one has to invest efforts into writing detailed mechanical proofs.

% decision procedures
Specialized decision procedures are a way of proving consensus algorithms.
They were used to prove Paxos~\cite{KraglEHMQ20}, which could itself be used in the aforementioned crash fault tolerant distributed ledgers.
Crash fault tolerant consensus algorithms were manually encoded with their invariants and properties
to prove formulae using the Z3 SMT solver~\cite{DragoiHVWZ14}.
Decision procedures also proved the safety of Byzantine fault tolerant consensus algorithms when $f=t$~\cite{BerkovitsLLPS19}
but not their termination.
Similarly, a proof by refinement of the safety of a Byzantine variant of Paxos was proposed~\cite{Lam11a} but its liveness is not proven.
%Proving liveness is key to ensuring that the consensus algorithm converges towards a decision.
These decision procedures require the
user to fit the specification into a suitable logical fragment.

% model checking
Explicit-state model checking fully automates verification of distributed algorithms~\cite{H2003,YPL99}.
It allows to check the reliable broadcast algorithm~\cite{JKSVW13:SPIN}, a common component of various blockchain consensus algorithms~\cite{MXC16,CGLR18,CGG19}.
TLC~\cite{YPL99} checked a reduction of fault tolerant distributed algorithms in the Heard-Of model
that exploits their communication-closed property~\cite{Chaouch-SaadCM09}.
And the agreement of consensus algorithms was proved in the asynchronous setting~\cite{NoguchiTK12}.
These explicit-state tools enumerate all reachable states and thus suffer from state explosion.

% symbolic model checking
Symbolic model checkers~\cite{BCMDH90} cope with this explosion by
representing state transitions efficiently.
NuSMV and SAT helped check consensus algorithms for up
to 10 processes~\cite{TS08,TsuchiyaS11}.
Apalache~\cite{KT19}  uses satisfiability modulo theories (SMT) to check inductive
invariants and verify symbolic executions of~\tlap{}
specifications of the reliable broadcast and crash fault tolerant consensus algorithms but requires
parameters to be fixed.
These tools cannot be used to prove (or disprove) correctness for an arbitrary number of
processes.

% parameterized model checking
Parameterized model checking~\cite{DF99} works for an
arbitrary number $n$ of processes~\cite{2015Bloem}.
Although the problem is undecidable~\cite{AK86} in general, one
can verify specific classes of algorithms~\cite{EK00}.
Indeed, distributed algorithms with a ring-based topology were checked with automata-theoretic method~\cite{AiswaryaBG18} and with Presburger arithmetics formulae verified by an SMT solver~\cite{SangnierSPT20}.
Bosco~\cite{SR08} has been the focus of various parameterized verification techniques~\cite{LWB17,Balasubramanian20}, however,
it acts as a fast path wrapper around
a separate correct consensus algorithm that remains itself to be proven.
The condition-based consensus algorithm~\cite{MRR03,MMPR03} was verified~\cite{Balasubramanian20} with the Byzantine model checker ByMC~\cite{KLVW17popl,KW18,marilinajogor}, only under the condition that the difference between the numbers of processes initialized with 0 and 1 differ by at least $t$.
Recently, the crash fault tolerant Ben-Or consensus algorithm was proved correct with a probabilistic reasoning extension of ByMC~\cite{BertrandKLW19}.
In this paper, we also exploit ByMC but prove the Byzantine consensus algorithm~\cite{CGLR18} of an existing blockchain~\cite{CNG21}.

\remove{
\paragraph{Relaxing the synchrony assumption.}
Synchrony requires that every message takes less than a \emph{known} amount of time~\cite{DLS88}, which is unrealistic in open networks due to congestion, BGP misconfigurations, natural disasters or network attacks. This is why there are many attacks (e.g.,~\cite{NG16,NG17,EGJ18,EGJ19}) known to steal assets in blockchains whose consensus protocol requires synchrony.
% partial synchrony
Some efforts have been devoted to verify consensus algorithms in the partially synchronous setting~\cite{DLS88} where after an \emph{unknown} global stabilization time (GST) all links deliver messages in a bounded amount of time.
Such algorithms were verified using parameterized model checking~\cite{MSB17},
however, these are only crash fault tolerant and cannot tolerate Byzantine failures.
PSync~\cite{CHZ16} views asynchronous executions in lock-steps and proves the LastVoting variant~\cite{CS09} of Paxos
but requires semi-decision procedures of a fragment of first-order logic.
Partial synchrony is often tied to some complexity in Byzantine consensus algorithms.
To terminate, partially synchronous algorithms typically distinguish the execution of a coordinator from the execution of other processes~\cite{CL02,CS09,Lam11a} and
rely on a monotonically increasing timer to catch up with the unknown bound on the message delay.
Our Byzantine consensus algorithm does not inherit such intricacies.

% fairness vs bisource
Instead of assuming partial synchrony, we assume a notion of fairness where broadcasting messages infinitely often eventually leads to receiving some of these messages in a particular order.
There exist related notions, like fair schedulers~\cite{BT85} and limited link synchrony~\cite{ADF04}.
Fair schedulers consider that the events of two processes receiving from two other processes are independent and that the probability for a process to receive from any other process is $\epsilon > 0$ in any round~\cite{BT85}. By contrast, our fairness is not probabilistic, allowing us to model check safety and (deterministic) liveness.
A key difference between our fairness assumption and partial synchrony is that our fairness does not impose restrictions on all links, which is similar
to the notion of minimal synchrony needed to solve consensus~\cite{ADF04}.
This minimal synchrony was later named $\diamond[x + 1]$-bi-source and helped solve Byzantine consensus without all $n^2$ point-to-point links being eventually synchronous~\cite{ADF06}. More precisely, $\diamond[x + 1]$-bi-source states that there is a correct process that has $x+1$ bi-directional links with itself and other correct processes and these links eventually behave synchronously.
It was shown in~\cite{BMR11}, that $(t+1)$-bi-source is necessary and sufficient to implement authenticated Byzantine consensus. Later, the same result was generalized to unauthenticated Byzantine consensus with $m \leq \lfloor (n-(t+1))/ t \rfloor$ distinct values~\cite{BMR15}.

% coordinator and synchronizer
Interestingly, assuming fairness instead of partial synchrony allows us to relax the need for
%Finally, an interesting novelty of our algorithm is that it neither needs 
a coordinator or a leader. %nor that  any link be eventually synchronous.
By contrast, all the consensus algorithms we know of that do not require all links to be timely
rely on a coordinator~\cite{ADF04,ADF06,HMT07,BMR15} or piggyback certificates that prevent their performance from scaling with the number of participants~\cite{KDG21}.
They use a rotating coordinator whose particular messages can influence the estimate of other processes, to help them converge.
If the coordinator does not manage to lead processes to a consensus, then another coordinator takes its role in what is called a new view. Before GST, processes may proceed at different rates. After GST, a synchronizer~\cite{BravoCG20} is typically required to ensure that sufficiently many processes take part in the same view in order to guarantee termination.
This is probably to circumvent this difficulty that the verification of the liveness of partially synchronous algorithms is often simplified by assuming synchrony~\cite{HawblitzelHKLPR15}.

% finitary fairness
The only work we know that assumes fairness for verification of asynchronous consensus is the one of finitary fairness~\cite{AH94}.
Unfortunately, it neither applies to message passing nor to Byzantine failures.
%is demonstrated in the shared memory context and without Byzantine failures.
One can see our contribution as the first formal verification of blockchain consensus made possible with
a novel composition technique.
%One can see our contribution as a step forward in the message passing context and with Byzantine failures, and in order to achieve this result we introduce a novel composition technique.
}

\section{Conclusion}\label{sec:conclusion}

We presented the first formal verification of a blockchain consensus
algorithm thanks to a new holistic approach. 
Previous attempts to formally verify the liveness of blockchain consensus consisted of
%of amending
%the original algorithm and 
verifying different parts of the consensus algorithm without verifying the sum of the parts.
%this amended version with different techniques 
By modeling directly the pseudocode into a disambiguated threshold automaton 
we guarantee that the ``actual''  algorithm is verified.
By model checking the threshold automaton without the need for user-defined invariants
and proofs, we drastically reduce the risks of human errors.  
%As a result, the
%notion of fairness we identified is a sufficient assumption to cope
%with the impossibility of solving consensus in asynchronous
%setting. In this sense, this work is a step towards addressing ``the
%need for more refined models of distributed computing that better
%reflect realistic assumptions'' that was raised as an open question
%in~\cite{FLP85}.
We believe that this holistic verification technique will help verify or identify bugs  
in other distributed algorithms based on various broadcast primitives.  

%\subsection*{Acknowledgements}
%This research is supported under Australian Research Council Future Fellowship funding scheme (project number 180100496) entitled ``The Red Belly Blockchain: A Scalable Blockchain for Internet of Things'',
%Interchain Foundation (Switzerland), and
%the European Research Council (ERC) under the European Union's Horizon 2020	research and innovation programme under grant agreement No 787367 (PaVeS).

% ----------------------------------------------------------------
% use ACM-Reference-Format for the references
\bibliographystyle{ACM-Reference-Format}
\bibliography{references}

\appendix

\section{Reducing multi-round \TA to one-round \TA}\label{app:multi2oneround}
Let us first formally define a (finite or infinite) \emph{run} in a (one-round or multi-round) counter system~$\ms{Sys}(\TA)$. It is an alternating sequence of configurations and transitions $\sigma_0, t_1,\sigma_1, t_2,\ldots$ such that $\sigma_0\in I$ is an initial configuration and for every $i\ge 1$ we have that $t_i$ is unlocked in $\sigma_{i-1}$, and executing it leads to~$\sigma_i$, denoted $t_i(\sigma_{i-1})=\sigma_i$.

Here we briefly describe the reasoning behind the reduction of multi-round \TA{}s to one-round TAs~\cite[Theorem~6]{BertrandKLW19}.
Note that the behavior of a process in one round only depends on the variables (the number of messages) of that round.
Namely, we check if a transition is unlocked in a round by evaluating a guard and a location counter in that round.
This allows us to modify a run by swapping two transitions from different rounds, as they do not affect each other, and preserve \LTLX properties, which are properties expressed in \LTL{} without the next operator~$\ltlX$.
The type of swapping we are interested in is the one where a transition of round~$R$ is followed by a transition of round~$R'<R$.
Starting from any (fully asynchronous) run, if we keep swapping all such pairs of transitions, we will obtain a run in which processes synchronize at the end of each round and which has the same \LTLX properties as the initial one.
This, so-called \emph{round-rigid} structure, allows us to isolate a single round and analyze it.
Still, different rounds might behave differently as they have different initial configurations.
If we have a formula $\forall R\in\nats.\; \varphi[R]$, where $\varphi[R]$ is in the above mentioned fragment of (multi-round) \LTL, then Theorem~6 of~\cite{BertrandKLW19} shows exactly that it is equivalent to check that (i) this formula holds (or $\varphi[R]$ holds on all rounds~$R$) on a multi-round $\TA$, and (ii) formula $\varphi[1]$ (or just $\varphi$) holds on the one-round $\TA'$ (naturally obtained from the $\TA$ by removing dotted round-switch rules) with respect to all possible initial configurations of all rounds.
Thus, we can verify properties of the form $\forall R\in\nats.\; \varphi[R]$ on multi-round threshold automata, by using ByMC to check $\varphi$ on a one-round threshold automaton with an enlarged set of initial configurations.

\section{Examples of fairness and of non-termination without fairness}\label{app:no-termination}

First, we explain that the fairness is satisfied as soon as one execution of \BVbcast\ has correct processes delivering all values broadcast by correct processes first.
Then, we explain that the Byzantine consensus algorithm cannot terminate without an additional assumption, like fairness.

\paragraph*{Relevance of the fairness assumption}
It is interesting to note that our fairness assumption is satisfied by the existence of an execution with a particular reception order of some messages of the two broadcasts within the \BVbcast. Consider that $t = \lceil n/3 \rceil -1$ and that at the beginning of a round $r$, the two following properties hold:
(i) estimate $r \modulo 2$ is more represented than estimate $(1-r) \modulo 2$ among correct processes and (ii) all correct processes deliver the values broadcast
by correct processes before any value broadcast during the \BVbcast\  by Byzantine processes are delivered. Indeed, the existence of such a round $r$ in any infinite sequence of
executions of \BVbcast\ implies that this sequence is fair (Def.~\ref{def:fair-bvb}): as
$r \modulo 2$ is the only value that can be broadcast by $t{+}1$ correct processes, this is the first value that is received from $t{+}1$ distinct processes
and rebroadcast by the rest of the correct processes. This is thus also the first value that is bv-delivered by all correct processes.

\paragraph*{Non-termination without fairness}
It is interesting to note why Algorithm~\ref{alg:bbc-safe-cc} does not
solve consensus when $t<n/3$ and without our \fairness assumption.  We
exhibit an example of execution of the algorithm with $n=4$ and $f=1$,
starting at round $r$ and for which the estimates of the correct
processes are kept as
%change \nbnote{change? it seems on the contrary that they are maintained, isn't it?}\vincent{yes, fixed.} from 
$(1 - r) \modulo 2, (1 - r)  \modulo 2, r \modulo 2$ in rounds $r$ and
%$\neg r \modulo 2, \neg r \modulo 2, r \modulo 2$ in round
$r{+}2$. Repeating this while incrementing $r$ yields an infinite
execution, so that the algorithm never terminates.
%We can find a recurring example of the algorithm with $n=4$ and $f=1$ where $r$ is the current round and the estimate of the correct processes change from $\neq r \modulo 2, \neg r \modulo 2, r \modulo 2$ to $\neg r \modulo 2, r \modulo 2, r \modulo 2$ in round $r+1$ and back to $\neq r \modulo 2, \neg r \modulo 2, r \modulo 2$ in round $r+2$.

\begin{lemma}
    Algorithm~\ref{alg:bbc-safe-cc} does not terminate without fairness.
\end{lemma}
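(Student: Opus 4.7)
My plan is to exhibit an explicit infinite execution with $n=4$ and $t=f=1$, three correct processes $p_1, p_2, p_3$ and one Byzantine process $p_4$, in which no correct process ever takes the decide branch at line~\ref{BYZ-safe-10}. Following the hint in the statement, I will maintain the invariant that the estimates of the correct processes at the start of every even round $r$ are $1, 1, 0$, and design message schedules for rounds $r$ and $r{+}1$ that first turn these estimates into $0, 0, 1$ and then back into $1, 1, 0$, while ensuring that the $\ms{qualifiers}$ set of every correct process is always different from $\{r \modulo 2\}$.

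For the \BVbcast\ of round $r$ (assume $r$ even, so the parity is $0$), the Byzantine $p_4$ bv-sends $(\lit{BV}, 0)$ to $p_1, p_2$ and $(\lit{BV}, 1)$ to $p_3$. Since $p_3$ bv-broadcasts $0$, the processes $p_1$ and $p_2$ each see $0$ from two distinct senders (namely $p_3$ and $p_4$) and rebroadcast $0$; symmetrically, $p_3$ sees $1$ from three distinct senders ($p_1$, $p_2$, $p_4$) and rebroadcasts~$1$. Once all messages are delivered, every correct process bv-delivers both $0$ and $1$, so $\ms{contestants}_i = \{0,1\}$ for $i \in \{1,2,3\}$. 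This is compatible with BV-Justification, since both binary values are bv-broadcast by some correct process.

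For the \textsc{aux} exchange, $p_1$ and $p_2$ broadcast $\textsc{aux}(1)$, $p_3$ broadcasts $\textsc{aux}(0)$, and $p_4$ sends $\textsc{aux}(0)$ to $p_1, p_2$ and $\textsc{aux}(1)$ to $p_3$. I would then schedule the $n{-}t = 3$ distinct \textsc{aux} receptions so that: $p_1$ and $p_2$ each collect the three messages from $p_1, p_2, p_4$, giving $\ms{qualifiers} = \{0, 1\}$ and forcing line~\ref{BYZ-safe-12} to reset their estimate to the parity $0$; and $p_3$ collects the three messages from $p_1, p_2, p_4$ (all carrying $1$), giving $\ms{qualifiers} = \{1\}$, which since $1 \neq r \modulo 2$ falls through to line~\ref{BYZ-safe-11} and sets $p_3$'s estimate to~$1$. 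No correct process ever sees $\ms{qualifiers} = \{r \modulo 2\}$, so line~\ref{BYZ-safe-10} is never taken. The round ends with estimates $0, 0, 1$, and a fully symmetric schedule in round $r{+}1$ (swapping the roles of $0$ and $1$, with $p_4$ now sending $(\lit{BV},1)$ and $\textsc{aux}(1)$ to $p_1,p_2$ and $(\lit{BV},0)$ and $\textsc{aux}(0)$ to $p_3$) converts $0, 0, 1$ back into $1, 1, 0$ without anyone deciding.

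Iterating this two-round cycle produces an infinite execution in which no correct process ever decides, contradicting the Termination property. The main obstacle is the bookkeeping: I must verify at each step that the Byzantine schedule is compatible with every guarantee of \BVbcast\ (in particular BV-Justification and BV-Uniformity, which the construction respects since both values are bv-broadcast by a correct process and eventually bv-delivered by every correct process) and that the asynchronous semantics permits the reception orders prescribed for the \textsc{aux} phase at each process; both follow directly from the adversary's freedom to interleave steps and to equivocate across recipients, and from the fact that the two-round cycle is itself not $(r \modulo 2)$-good for any $r$, so it does not violate the fairness assumption we are explicitly \emph{not} making here.
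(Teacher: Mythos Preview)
Your proposal is correct and takes essentially the same approach as the paper: both exhibit an explicit non-terminating execution with $n=4$, $t=f=1$ and one Byzantine process~$p_4$, arranging a two-round cycle of estimates so that no correct process ever obtains $\ms{qualifiers}=\{r \bmod 2\}$ (the paper cycles $0,0,1 \leftrightarrow 0,1,1$, you cycle $1,1,0 \leftrightarrow 0,0,1$, which is a cosmetic difference). One small fix: in your \BVbcast\ description, $p_3$ should receive BV($1$) from only $t{+}1=2$ senders ($p_1,p_2$) before rebroadcasting, with $p_4$'s BV($1$) to~$p_3$ delayed until after $p_3$ has bv-delivered~$0$; as written, ``$p_3$ sees~$1$ from three distinct senders'' would make $p_3$ bv-deliver~$1$ first and hence broadcast $\textsc{aux}(1)$ rather than $\textsc{aux}(0)$.
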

\begin{proof}
    Consider, for example, processes $p_1, p_2, p_3$ and $p_4$ where $p_4$ is Byzantine and where
    $0, 0, 1$ are the input values of the correct processes $p_1, p_2, p_3$, respectively, at round 1.
    We show that at the beginning of round 2, $p_1, p_2, p_3$ have estimates $0, 1, 1$.
    First, as a result of the broadcast (line~\ref{line:bvb-first-bcast}), consider that
    $p_1$ and $p_2$ receive 0 from $p_1$, $p_2$ and $p_4$ so that $p_1, p_2$ bv-deliver $0$.
    Second, $p_2$ and $p_3$ receive 1 from $p_3$,  $p_4$ and finally $p_2$ so that $p_2, p_3$ bv-deliver $1$.
    Third, $p_3$ receives 0 from $p_0$, $p_2$ and finally from $p_3$ itself, hence $p_3$ bv-delivers $0$.
    Now we have:
    (a) $p_1, p_2, p_3$ bv-deliver $0$, $0$, $1$ and
    (b) $p_2, p_3$ later bv-deliver $1$ and $0$, respectively.
    As a result of (a), we have
    $p_1, p_2$ broadcast, and say $p_4$ sends, $\tup{\AUX, 0, \cdot}$ so that $p_0$ receives these three messages,
    $p_1, p_2$ broadcast $\tup{\AUX, 0, \cdot}$, and say $p_4$ sends, $\tup{\AUX, 1, \cdot}$ to $p_2$ so that $p_2$ receives these messages,
    $p_1$ broadcasts $\tup{\AUX, 0, \cdot}$ while $p_3$ broadcasts, and say $p_4$ sends, $\tup{\AUX, 1, \cdot}$ so that $p_3$ receives these messages.
    Finally, by (b) we have $\ms{contestants}_2 = \ms{contestants}_3 = \{0, 1\}$.
    This implies that the $n-t$ first values inserted in $\ms{favorites}_1$, $\ms{favorites}_2$ and $\ms{favorites}_3$ in round $r$ are values $\{0\}$, $\{0,1\}$, $\{0,1\}$, respectively.
    Finally, $\ms{qualifiers}_1$, $\ms{qualifiers}_2$ and $\ms{qualifiers}_3$ are $\{0\}$, $\{0,1\}$ and $\{0,1\}$, respectively.
    And $p_1, p_2, p_3$ set their estimate to $0, 1, 1$.

    It is easy to see that a symmetric execution in round $r'=r+1$ leads processes to change their estimate from $0, 1, 1$
    to $0, 0, 1$ looping back to the state where $r \modulo 2 = 1$ and estimate are $(1 - r) \modulo 2, (1 - r) \modulo 2, r \modulo 2$.
\end{proof}

\section{Starting a round with identical estimate}\label{app:same-est}

\begin{lemma}[Lemma~\ref{lem:sameest}]
    If the infinite sequence of bv-broadcast invocations of Algorithm~\ref{alg:bbc-safe-cc} is fair,
    with the $r^{\ms{th}}$ invocation (in round~$r$) being $(r \modulo 2)$-good,
    %    If during a round $r$ of Algorithm~\ref{alg:bbc-safe-cc}, all correct processes bv-deliver $r \modulo 2$  first, 
    then all correct processes start round~$r{+}1$ of
    Algorithm~\ref{alg:bbc-safe-cc} with estimate $r \modulo 2$.
\end{lemma}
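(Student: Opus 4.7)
Let $v := r \modulo 2$. By hypothesis the $r^{\ms{th}}$ bv-broadcast execution is $v$-good, so by Definition~\ref{def:good-exec} (equivalently, by the LTL characterization of $v$-goodness) every correct process bv-delivers $v$ first and no correct process ever inserts $1{-}v$ into its $\ms{contestants}$ set before $v$. The plan is to track this value $v$ through the rest of round~$r$ of Algorithm~\ref{alg:bbc-safe-cc} and to verify that, regardless of Byzantine behavior, each correct process finishes round~$r$ with $\ms{est} = v$.

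First I would argue that every correct process $p_i$ broadcasts its \AUX message of round~$r$ carrying precisely $v$: the process picks the broadcast value from $\ms{contestants}_i$, which by $v$-goodness contains $v$ first and only $v$ at the moment of the \AUX broadcast. Next I would analyze the $n-t$ \AUX messages a correct process $p_i$ collects at line~\ref{BYZ-safe-07-a}: each such message must carry a value in $\ms{contestants}_i$, and by the bookkeeping of $\ms{favorites}$ its value is recorded there. Since at most $f \le t$ of these messages come from Byzantine processes, at least $n-t-f \ge n-2t > t \ge 1$ come from correct processes, and by the previous step each such correct-process contribution carries $v$. Consequently $v \in \ms{qualifiers}_i$ at every correct process $p_i$.

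It follows that $\ms{qualifiers}_i$ is either the singleton $\{v\}$ or $\{0,1\}$. A straightforward case split finishes the argument: if $\ms{qualifiers}_i = \{v\}$, then because $v = r \modulo 2$ the branch at line~\ref{BYZ-safe-10} fires and $p_i$ decides $v$ (and its estimate remains $v$); if $\ms{qualifiers}_i = \{0,1\}$, then line~\ref{BYZ-safe-12} sets $\ms{est}_i := r \modulo 2 = v$. In either case $\ms{est}_i = v$ when $p_i$ increments the round counter, so $p_i$ starts round $r{+}1$ with estimate $v = r \modulo 2$.

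The main obstacle I anticipate is the second step, namely establishing that the \AUX message of a correct process in round~$r$ necessarily carries $v$. This requires carefully lining up the asynchronous execution of the inner \BVbcast\ with the outer code: a correct process could, \emph{a priori}, broadcast \AUX based on any value already inserted into $\ms{contestants}$, and one must rule out that $1{-}v$ has been bv-delivered first. This is exactly where the LTL characterization of $v$-goodness ($\ltlG(\kappa[C_{1-v}]=0 \wedge \kappa[CB_{1-v}]=0)$) is used: it forbids any correct process from ever reaching a state in which $1{-}v$ was delivered before $v$, so at the moment an \AUX message is emitted the set $\ms{contestants}$ is either $\{v\}$ or has $v$ as its first element, whichever convention the pseudocode uses for selecting the broadcast payload.
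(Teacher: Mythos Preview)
Your argument is essentially the paper's: from $v$-goodness conclude that every correct process bv-delivers $v$ first and therefore broadcasts $\AUX(v)$; use the $n{-}t$ quorum to force $v\in\ms{qualifiers}_i$; then case-split on $\ms{qualifiers}_i\in\{\{v\},\{0,1\}\}$. The paper phrases the middle step as a chain property ($\ms{qualifiers}_i\subseteq\ms{qualifiers}_j$ or conversely) and then identifies the singleton with $\{r\modulo 2\}$, but the substance is identical to your counting argument.

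One small gap to fix: in the singleton case you write that $p_i$'s estimate ``remains $v$'', but nothing you have established forces $\ms{est}_i=v$ at the start of round~$r$; a correct process can enter round~$r$ with $\ms{est}_i=1{-}v$ and still end up with $\ms{qualifiers}_i=\{v\}$. What actually closes this case (and what the paper invokes, pointing to line~\ref{BYZ-safe-11}) is that the singleton branch of Algorithm~\ref{alg:bbc-safe-cc} assigns $\ms{est}_i\gets w$ regardless of whether $\lit{decide}(w)$ is also called---this is also visible in the threshold automaton of Figure~\ref{fig:large-ta}, where $D_1$ feeds into $V_1'$. So you should say $\ms{est}_i$ \emph{becomes} $v$, not that it remains $v$.
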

\begin{proof}
    The argument is that all correct processes wait until a growing prefix of the bv-delivered values that are re-broadcast implies that there is a subset of favorites, called $\ms{qualifiers}$, containing messages from $n-t$ distinct processes such that $\forall v \in \ms{qualifiers}.\; v \in \ms{contestants}$.
    %    Due to the fair binary value broadcast property and that  Algorithm~\ref{alg:bbc-safe-cc} invokes the bv-broadcast infinitely often, it can only be that $\ms{p_i.qualifiers} \subseteq \ms{p_j.qualifiers}$ or $\ms{p_j.qualifiers} \subseteq \ms{p_i.qualifiers}$  for any pair of correct processes $p_i$ and $p_j$ in some round, say $r$.
    As we assume that the infinite sequence of bv-broadcast invocations of Algorithm~\ref{alg:bbc-safe-cc} is fair,
    with the $r^{\ms{th}}$ invocation being $(r \modulo 2)$-good, then we know that in round~$r$ for every pair of correct processes $p_i$ and $p_j$ we have $\ms{p_i.qualifiers} \subseteq \ms{p_j.qualifiers}$ or $\ms{p_j.qualifiers} \subseteq \ms{p_i.qualifiers}$.
    If $\ms{p_i.qualifiers} = \ms{p_j.qualifiers}$ for all pairs, then by examination of the code, we know that they will set their estimate $\ms{est}$ to the same value depending on the parity of the current round.

    Consider instead, with no loss of generality, that
    $\ms{p_i.qualifiers}$ is a strict subset of $\ms{p_j.qualifiers}$
    in round $r$. As their values can only be binaries, in $\{0,1\}$,
    this means that $\ms{p_i.qualifiers}$ is a singleton, say $\{w\}$.
    As all correct processes bv-deliver $r \modulo 2$ first, which is
    then broadcast into $\ms{p_i.favorites}$, we have
    $w = r \modulo 2$ and $p_i$'s estimate becomes $r \modulo 2$ at
    line~\ref{BYZ-safe-11}.  As $\ms{p_j.qualifiers}$ is $\{0,1\}$,
    the estimate of $p_j$ is also set to $r \modulo 2$ but at
    line~\ref{BYZ-safe-12}.
\end{proof}

\section{Large TA}\label{app:largeta}

Table~\ref{tab:pta-rules} details the rules for the first half of the threshold automaton from Fig.~\ref{fig:large-ta}.
\begin{table*}[htbp]
    \begin{center}
        \begin{tabular}{lll}
            \text{Rules}             & \text{Guard}             & \text{Update} \\
            \hline
            $r_1$                    & $\mathit{true}$          & $\cpp{b_0}$   \\
            $r_2$                    & $\mathit{true}$          & $\cpp{b_1}$   \\
            $r_3$                    & $b_0\ge 2t{+}1{-}f$      & $\cpp{a_0}$   \\
            $r_4$                    & $b_1\ge t{+}1{-}f$       & $\cpp{b_1}$   \\
            $r_5$                    & $b_0\ge t{+}1{-}f$       & $\cpp{b_0}$   \\
            $r_6$                    & $b_1\ge 2t{+}1{-}f$      & $\cpp{a_1}$   \\
            $r_{14}, r_{15}, r_{16}$ & $a_0\ge n{-}t{-}f$       & ---           \\
            $r_{8}$                  & $b_1 \ge t{+}1{-}f$      & $\cpp{b_1}$   \\
            $r_{9}$                  & $b_1 \ge 2t{+}1{-}f$     & $\cpp{a_1}$   \\
            $r_{10}$                 & $b_0 \ge 2t{+}1{-}f$     & $\cpp{a_0}$   \\
            $r_{11} $                & $b_0 \ge t{+}1{-}f$      & $\cpp{b_0}$   \\
            $r_{12}$                 & $b_1 \ge 2t{+}1{-}f$     & ---           \\
            $r_{13}$                 & $b_0 \ge 2t{+}1{-}f$     & ---           \\
            $r_{7}, r_{18}, r_{19} $ & $a_1\ge n{-}t{-}f$       & ---           \\
            $r_{16}$                 & $a_0\ge n{-}t{-}f$       & ---           \\
            $r_{17}$                 & $a_0{+}a_1\ge n{-}t{-}f$ & ---           \\
            $r_{20}, r_{21}, r_{22}$ & $\mathit{true}$          & ---
        \end{tabular}
    \end{center}
    \caption{The rules of the threshold automaton from Fig.~\ref{fig:large-ta}. We omit self loops that have trivial guard $\mathit{true}$ and no update.
        % \mlnote{I guess this should be in the appendix but for now I added it here in order to help us better understand the automaton.} 
        % \mlnote{I found 7 different guards.}
    }
    \label{tab:pta-rules}
\end{table*}

\section{Missing proof of Corollary~\ref{coro:consequence-of-good-bvb}}\label{app:corollary}

We restate here Corollary~\ref{coro:consequence-of-good-bvb} and give its proof.
\begin{corollary}%[Corollary~\ref{coro:consequence-of-good-bvb}]
    Let $r \in \nats$ be such that the $r^{\ms{th}}$ execution of
    $\BVbcast$ in Algorithm~\ref{alg:bbc-safe-cc} is
    $(r \modulo 2)$-good.
    % Assume the infinite sequence of $\BVbcast$ invocations of
    % Algorithm~\ref{alg:bbc-safe-cc} is fair, with the $r^{\ms{th}}$
    % invocation (in round~$r$) being $(r \modulo 2)$-good.
    Then:
    \begin{itemize}
        \item If there exists $R \in \nats$ with $r = 2R{-}1$, then
              $\ltlG \bigl(\kappa[M_0,R]=0)$ holds.
        \item If there exists $R \in \nats$ with $r = 2R$, then
              $\ltlG(\kappa[M'_1,R]=0\bigr)$ holds.
    \end{itemize}
\end{corollary}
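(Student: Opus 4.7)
The plan is to trace the abstraction from the inner $\BVbcast$ \TA of Fig.~\ref{fig:bvb-ta} to the simplified outer \TA of Fig.~\ref{fig:dbft_safe_comp}, and then read the hypothesis of $(r \modulo 2)$-goodness directly off the locations of the latter. The key conceptual point, established in Section~\ref{sec:states-ta}, is that superround~$R$ of the simplified \TA bundles round $2R{-}1$ (modelled by the unprimed locations $M,M_0,M_1,M_{01},\ldots$) and round $2R$ (modelled by the primed locations $M',M'_0,M'_1,M'_{01},\ldots$). Under this identification, a correct process sits in $M_0$ (resp.\ $M_{01}$) in round~$R$ exactly when, inside the $(2R{-}1)$-th instance of $\BVbcast$, it has bv-delivered $0$ first (resp.\ has bv-delivered both values, starting with $0$)\textemdash and symmetrically for $M_1$ on the other side, and for the primed copies in round $2R$.

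First I would make this correspondence precise by matching the collapsed locations of the simplified \TA to groups of internal locations of the inner $\BVbcast$ \TA: $M_v$ plays the role of the pair $\{C_v, CB_v\}$, while $M_{01}$ plays the role of $C_{01}$ after an initial delivery of one of the two values. This matching is exactly what underlies the construction described informally in Section~\ref{sec:states-ta} and is what justifies the presence of the rules $s_3,s_4$ and $s_6,s_7$. Once this is fixed, the definition of $v$-goodness from Section~\ref{sec:fairness}, namely
\begin{equation*}
\ltlG\bigl(\kappa[C_{1-v}]=0 \;\land\; \kappa[CB_{1-v}]=0\bigr)\enspace,
\end{equation*}
translates on the simplified \TA of Fig.~\ref{fig:dbft_safe_comp}, in the appropriate round, to the statement that the ``$1-v$ first'' collapsed location is never occupied.

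Now I would apply the hypothesis case by case. If $r = 2R{-}1$, then $r \modulo 2 = 1$, so the $r$-th $\BVbcast$ execution\textemdash which by the construction of superrounds is the one modelled by the unprimed part of superround~$R$\textemdash is $1$-good. By the correspondence above, no correct process ever enters the ``$0$ first'' location $M_0$ in that superround, which is exactly $\ltlG(\kappa[M_0,R]=0)$. Symmetrically, if $r = 2R$, then $r \modulo 2 = 0$ and the primed part of superround~$R$ is $0$-good; hence no correct process enters $M'_1$, giving $\ltlG(\kappa[M'_1,R]=0)$.

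The only step requiring care\textemdash and the one I would write out fully\textemdash is the justification of the collapsed-location correspondence, because it is the bridge between the verified \LTL\ specification of $\BVbcast$ in Section~\ref{ssec:bvbcas-proof} and the abstract rules $s_3,\ldots,s_{11}$ of the simplified \TA. Once this correspondence is argued (by inspection of the rules and of the role of the shared variables $\textit{bvb}_0,\textit{bvb}_1,a_0,a_1$ and their primed counterparts), both bullets follow immediately from Definition~\ref{def:good-exec} and the arithmetic $r \modulo 2 \in \{0,1\}$. No induction on rounds and no further appeal to the \BVbcast\ properties beyond goodness is needed.
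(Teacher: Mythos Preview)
Your proof is correct and follows essentially the same line as the paper's: unpack the definition of $(r \modulo 2)$-goodness, match it against the semantics of the collapsed locations $M_v$ and $M'_v$ in the simplified \TA, and do a case split on the parity of $r$.

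The one noteworthy difference is that the paper routes the argument through Lemma~\ref{lem:sameest} (all correct processes start round $r{+}1$ with estimate $r \modulo 2$), whereas you argue directly from Definition~\ref{def:good-exec} and the correspondence $M_v \leftrightarrow \{C_v,CB_v\}$. Your direct route is arguably tighter here: the conclusion concerns which location is visited \emph{within} round $r$ (the unprimed half when $r=2R{-}1$, the primed half when $r=2R$), so the next-round estimate is not strictly needed for the statement as written. You are also more explicit than the paper about why the collapsed-location correspondence holds, which is the only non-mechanical step; spelling that out, as you propose, is the right emphasis.
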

\begin{proof}
    By definition of an $(r \modulo 2)$-good execution, we know that in
    this particular invocation of $\BVbcast$, all correct processes
    bv-deliver $r \modulo 2$ first.  It follows from
    Lemma~\ref{lem:sameest}, that all correct processes start the next
    round with estimate set to $r \modulo 2$.  There are two cases to
    consider depending on the parity of the round: If $r \modulo 2 = 1$,
    then this is the first round of superround $R$, i.e., $r = 2R-1$. As
    a result, $\ltlG \bigl(\kappa[M_{0},R]=0\bigr)$.  If
    $r \modulo 2 = 0$, then this is the second round of superround $R$,
    i.e., $r = 2R$. As a result, $\ltlG \bigl(\kappa[M'_{1},R]=0\bigr)$.
\end{proof}

\section{Specification of the termination property in the simplified
  threshold automaton for consensus algorithm}

\label{sec:ta-spec-termination}

The reliable communication assumption and the properties
guaranteed by the \BVbcast\ are expressed as preconditions for $\ms{s\_round\_termination}$. The
progress conditions work exactly the same as in
\cite{BertrandKLW19}. However, since the shared counters representing the
\BVbcast\ execution do not represent regular messages, we cannot
directly use the reliable communication assumption. Instead, we use the properties of the
\BVbcast\ that we proved in a separate automaton.

In practice, instead of using progress preconditions on the \BVbcast\ counters in
$\ms{s\_round\_termination}$, such as:

\small
\begin{verbatim}
(locM  == 0 || bvb1 < 1) && (locM  == 0 || bvb0 < 1) && 
(locM1 == 0 || bvb0 < 1) && (locM0 == 0 || bvb1 < 1)     
\end{verbatim}
\normalsize

we use the following:

\small
\begin{verbatim}
/* BV-Termination */
(locM    == 0) && 
/* BV-Obligation */
(locM1 == 0 || bvb0 < T + 1) && (locM0 == 0 || bvb1 < T + 1) &&        
/* BV-Uniformity */
(locM1 == 0 || aux0 == 0) && (locM0 == 0 || aux1 == 0) &&
\end{verbatim}
\normalsize

One can note that we do not use BV-Justification as a precondition in
this specification. Instead, the BV-Justification property is baked in
the structure of the simplified threshold automaton (in the guard of
the transition $M \rightarrow M0, M1$).

The complete specification of the termination property follows:

\small
% Minipage to prevent page break
\begin{minipage}{\textwidth}
    \begin{multicols}{2}
        \begin{verbatim}
s_round_termination: 
<>[](
    (locV0   == 0) &&
    (locV1   == 0) &&

    /* BV-Termination */
    (locM    == 0) && 
    /* BV-Obligation */
    (locM1 == 0 || bvb0 < T + 1) && 
    (locM0 == 0 || bvb1 < T + 1) &&        
    /* BV-Uniformity */
    (locM1 == 0 || aux0 == 0) && 
    (locM0 == 0 || aux1 == 0) &&

    /* Business as usual */
    (locM1   == 0 || aux1 < N - T) &&
    (locM0   == 0 || aux0 < N - T) &&
    (locM01  == 0 || aux0 + aux1 < N - T) &&

    (locD1   == 0) &&
    (locE0   == 0) &&
    (locE1   == 0) &&

    /* BV-Termination */
    (locMx    == 0 ) && 
    /* BV-Obligation */
    (locM1x   == 0 || bvb0x < T + 1) &&
    (locM0x   == 0 || bvb1x < T + 1) &&
    /* BV-Uniformity */
    (locM1x == 0 || aux0x == 0) && 
    (locM0x == 0 || aux1x == 0) &&

    (locM1x   == 0 || aux1x < N - T) &&
    (locM0x   == 0 || aux0x < N - T) &&
    (locM01x  == 0 || aux1x < N - T) &&
    (locM01x  == 0 || aux0x < N - T) &&
    (locM01x  == 0 || aux0x + aux1x < N - T)
    ) 
->
<>(       
    locV0   == 0 &&
    locV1   == 0 &&
    locM    == 0 &&
    locM0   == 0 &&
    locM1   == 0 &&
    locM01  == 0 &&
    locE0   == 0 &&
    locE1   == 0 &&
    locD1   == 0 &&
    locMx   == 0 &&
    locM0x  == 0 &&
    locM1x  == 0 &&
    locM01x == 0
);

inv1_0: <>(locD0 != 0) -> [](locD1 == 0 && locE1x == 0);

inv2_0: [](locV0 == 0) -> [](locD0 == 0 && locE0x == 0);

inv1_1: <>(locD1 != 0) -> [](locD0 == 0 && locE0x == 0);

inv2_1: [](locV1 == 0) -> [](locD1 == 0 && locE1x == 0);

dec_0:  [](locV0 == 0) -> [](locE0 == 0 && locE1 == 0);

dec_1:  [](locV1 == 0) -> [](locE0x == 0 && locE1x == 0);

good_0: [](locM0 == 0) -> [](locD0 == 0 && locE0x == 0);

good_1: [](locM1x == 0) -> [](locE1x == 0);
      
\end{verbatim}
    \end{multicols}
\end{minipage}

\normalsize

\end{document}